\newtheorem{proposition}{Proposition}[section]
\newtheorem{rem}{Remark    }[section]
\newcommand{\ket}[1]{{\left\vert{#1}\right\rangle}}
\newcommand{\Syst}[2]{\left\{\begin{array}{ccccc} #1\\ #2 \end{array}\right.}
\def\CC{\mathbb{C}}
\def\PP{\mathbb{P}}
\def\calH{\mathcal{H}}
\def\calO{\mathcal{O}}
\newcommand{\qw}[1][-1]{\ar @{-} [0,#1]}
\newcommand{\qwx}[1][-1]{\ar @{-} [#1,0]}
\newcommand{\gate}[1]{*+<.6em>{#1} \POS ="i","i"+UR;"i"+UL **\dir{-};"i"+DL **\dir{-};"i"+DR **\dir{-};"i"+UR **\dir{-},"i" \qw}
\newcommand{\meter}{*=<1.8em,1.4em>{\xy ="j","j"-<.778em,.322em>;{"j"+<.778em,-.322em> \ellipse ur,_{}},"j"-<0em,.4em>;p+<.5em,.9em> **\dir{-},"j"+<2.2em,2.2em>*{},"j"-<2.2em,2.2em>*{} \endxy} \POS ="i","i"+UR;"i"+UL **\dir{-};"i"+DL **\dir{-};"i"+DR **\dir{-};"i"+UR **\dir{-},"i" \qw}
\newcommand{\control}{*!<0em,.025em>-=-<.2em>{\bullet}}
\newcommand{\ctrl}[1]{\control \qwx[#1] \qw}
\newcommand{\multigate}[2]{*+<1em,.9em>{\hphantom{#2}} \POS [0,0]="i",[0,0].[#1,0]="e",!C *{#2},"e"+UR;"e"+UL **\dir{-};"e"+DL **\dir{-};"e"+DR **\dir{-};"e"+UR **\dir{-},"i" \qw}
\newcommand{\ghost}[1]{*+<1em,.9em>{\hphantom{#1}} \qw}
\newcommand{\lstick}[1]{*!R!<.5em,0em>=<0em>{#1}}
\newcommand{\ustick}[1]{*!D!<0em,-.5em>=<0em>{#1}}
\newcommand{\Qcircuit}{\xymatrix @*=<0em>}
\def\SLOCC{\text{SLOCC}}
\DeclarePairedDelimiter\ceil{\lceil}{\rceil}
\DeclarePairedDelimiter\floor{\lfloor}{\rfloor}
\begin{document}

%\preprint{APS/123-QED}

\title{Quantum Entanglement involved in Grover's and Shor's algorithms: \\the four-qubit case}% Force line breaks with \\
%\thanks{A footnote to the article title}%

\author{Hamza Jaffali}%
 \email{hamza.jaffali@utbm.fr, FEMTO-ST/UTBM, Universit\'e de Bourgogne Franche-Comt\'e, 90010 Belfort Cedex, France.}
\author{Fr\'ed\'eric Holweck}
\email{frederic.holweck@utbm.fr, ICB/UTBM, Universit\'e de Bourgogne Franche-Comt\'e, 90010 Belfort Cedex, France
}%

% \collaboration{MUSO Collaboration}%\noaffiliation
% 
% \author{Charlie Author}
%  \homepage{http://www.Second.institution.edu/~Charlie.Author}
% \affiliation{
%  Second institution and/or address\\
%  This line break forced% with \\
% }%
% \affiliation{
%  Third institution, the second for Charlie Author
% }%
% \author{Delta Author}
% \affiliation{%
%  Authors' institution and/or address\\
%  This line break forced with \textbackslash\textbackslash
% }%
% 
% \collaboration{CLEO Collaboration}%\noaffiliation

\date{\today}% It is always \today, today,
             %  but any date may be explicitly specified

\begin{abstract}
In this paper, we study the nature of entanglement in quantum Grover's and Shor's algorithms. So far, the authors who have been interested in this problem have approached the question quantitatively by introducing entanglement measures (numerical ones most of the time). One can ask a different question: what about a qualitative measure of entanglement ? In other words, we try to find what are the different entanglement SLOCC classes that can be generated by these two algorithms. We treat in this article the case of pure four-qubit systems.
\begin{description}
%\item[Usage]
%Secondary publications and information retrieval purposes.
\item[Keywords] Entanglement, Four-qubits systems, Gover's algorithmn, Shor's algorithm, Quantum Fourier Transform, Periodic States.
\item[PACS] 02.40.-k, 03.65.Fd, 03.67.-a, 03.67.Lx, 03.65.Ud
%May be entered using the \verb+\pacs{#1}+ command.
%\item[Structure]
%You may use the \texttt{description} environment to structure your abstract;
%use the optional argument of the \verb+\item+ command to give the category of each item. 
\end{description}
\end{abstract}

% PACS, the Physics and Astronomy
                             % Classification Scheme.
%\keywords{Suggested keywords}%Use showkeys class option if keyword
                              %display desired
\maketitle

%\tableofcontents

\section{Introduction}\label{intro}

Nowadays, Quantum Computation and Quantum Information Theory are considered as valuable candidates for the future of computer science and information processing. The existence in the litterature of quantum algorithms, quantum communication protocols and quantum cryptographic schemes, that outclass their classical counterparts, leads naturally to ask the question of what makes quantum computation so efficient.

~

One of the possible answers is to look at quantum entanglement. Quantum entanglement is considered as one of the most important ressource in quantum computation and quantum information processing. It has been proved that for quantum algorithms, entanglement is necessary to provide speed-up \cite{Jozsa,Jozsa2}. The problem of understanding  entanglement have interested many scientists over the last 80 years with a scientific production going from theoretical interpretation to experimental manipulation of entanglement. Today with the recent development of quantum technologies, it is still an open problem to understand how quantum entanglement appears and evolves in quantum computation. 
%Many quantitative and some qualitative approaches have been used to tackle it, but the problem is still open for most of the quantum systems.

~

So far, the authors who have been interested in this problem have most of the time approached the question quantitatively by using entanglement measures \cite{Haddadi} (numerical ones most of the time). Majorization theory was also applied to study quantum algorithms \cite{Latorre,Orus,Orus2}, showing that there is a majorization principle underlying the way quantum algorithms operate (even for every step of the Quantum Fourier Transform). 

~

In this paper we investigate quantum entanglement, using tools from Algebraic Geometry and Invariant Theory, to give a qualitative description of the quantum states involved in two well known algorithms. Studying what types of entanglement do or do not appear during the algorithm, and how these types evolve, can be helpful to understand more precisely the role and nature of entanglement in quantum computation.

~

In our study, we focus on the entanglement of four-qubit states in Grover's and Shor's algorithms. The four-qubit case is interesting because it contains an infinite number of SLOCC orbits, but there is still a classification  in terms of families of normal forms. Moreover, with the development of small quantum devices (IBM Quantum Experience), it is nowadays possible to implement famous quantum algorithms, such as Grover's and Shor's, with a limited number of qubits.

~

The paper is organized as follows. In Section \ref{four-qubits} we recall the four-qubit classification and related developed tools for classification purposes. In Section \ref{Resultgrover} we focus on Grover's algorithm  and present our results for the four-qubit case. In Section \ref{shor_algo} we study entanglement of periodic states appearing in Shor's algorithm, before and after the application of Quantum Fourier Transform. In the same section, we also brievely study the influence of QFT on the entanglement of general quantum states. Finally, we discuss our results and present a conclusion in Section \ref{conclusion}. The main observation raised by our study is that not all SLOCC classes appear and it is reasonable to wonder why some states, like the $\ket{W}$ state, never show up in those examples.

\section{Entanglement of four-qubits}\label{four-qubits}

\subsection{Verstraete \textit{et al.} classification}

The Hilbert space of 4-qubits $\calH = \CC^2 \otimes \CC^2 \otimes \CC^2 \otimes \CC^2$ contains an infinite number of orbits under the action of the SLOCC group $G = GL_2(\CC) \times  GL_2(\CC) \times  GL_2(\CC) \times  GL_2(\CC) $. 

~

In 2002, Verstraete \textit{et al.} \cite{Verstraete} used an original approach by studying the action of the group $SO(4)\times SO(4)$ as a subgroup of $SO(8)$ and generalizing the singular-value decomposition in matrix analysis to complex orthogonal equivalence classes. They proposed a list of 9 normal forms depending on parameters, which, up to permutation of the qubits, permit the parametrization of all SLOCC orbits. This list was corrected in 2006 by Chterental and Djokovic \cite{Chterental}, and this one was used for our work (see Table \ref{tableVerstraete}). 
%This last classification proposed by Chterental and Djokovic for 4-qubits pure states classification under $\text{SLOCC}$ exploited the strong connection with the Lie algebra $\mathfrak{so}_8$.

\begin{table}[!h] 
 \begin{center}
\begin{tabular}{c}
\hline \hline \\
$ G_{abcd} = \frac{a+d}{2}\big(\ket{0000}+\ket{1111}\big) + \frac{a-d}{2}\big(\ket{0011}+\ket{1100}\big) + \frac{b+c}{2}\big(\ket{0101}+\ket{1010}\big) + \frac{b-c}{2}\big(\ket{0110}+\ket{1001}\big)$ \\
\\
$ L_{abc_2} = \frac{a+b}{2}\big(\ket{0000}+\ket{1111}\big) + \frac{a-b}{2}\big(\ket{0011}+\ket{1100}\big)  + c \big(\ket{0101}+\ket{1010}\big) + \ket{0110} $ \\
\\
$ L_{a_2b_2} = a \big(\ket{0000}+\ket{1111}\big) + b \big(\ket{0101}+\ket{1010}\big) + \ket{0011} + \ket{0110} $ \\
\\
$ L_{ab_3} =  a \big(\ket{0000}+\ket{1111}\big) + \frac{a+b}{2}\big(\ket{0101}+\ket{1010}\big) + \frac{a-b}{2}\big(\ket{0110}+\ket{1001}\big) + \frac{i}{\sqrt{2}}\big( \ket{0001} + \ket{0010} - \ket{0111} - \ket{1011} \big)    $ \\
\\
$ L_{a_4} = a \big(\ket{0000}+ \ket{0101} + \ket{1010} + \ket{1111}\big)+ i\ket{0001} + \ket{0110} - i\ket{1011} $ \\
\\
$ L_{a_20_{3\oplus \overline{1}}} = a \big(\ket{0000}+\ket{1111}\big) + \ket{0011} + \ket{0101} + \ket{0110} $ \\
\\
$ L_{0_{5\oplus \overline{3}}} = \ket{0000} + \ket{0101} + \ket{1000} + \ket{1110} $ \\
\\
$ L_{0_{7\oplus \overline{1}}} =  \ket{0000} + \ket{1011} + \ket{1101} + \ket{1110} $ \\
\\
$ L_{0_{3\oplus \overline{1}}0_{3\oplus \overline{1}}} = \ket{0000} + \ket{0111}$ \\
\\
\hline \hline
\end{tabular}
\end{center}
\caption{The nine (corrected) Verstaete \textit{et al.} forms.}
\label{tableVerstraete}
\end{table}

\subsection{Algebraic geometry and entanglement}

Let us denote by  $\ket{j_1} \otimes \ket{j_2} \otimes \ket{j_3} \otimes \ket{j_4}$ the standard basis, with $j_i \in [\![ 0,1 ]\!]$, of $\calH$. By shortening the basis notation to $\ket{j_1 j_2 j_3 j_4}$, we can write a 4-qubit state $\ket{\Psi}$ as: 

\begin{equation}
\ket{\Psi} = \sum_{j_1, j_2 ,j_3, j_4 \in  [\![ 0,1 ]\!]} a_{j_1 j_2 j_3 j_4} ~ \ket{j_1 j_2 j_3 j_4},
\end{equation}

with $a_{j_1 j_2 j_3 j_4} \in \CC$ and $\sum_{j_1, j_2 ,j_3, j_4 } | a_{j_1 j_2 j_3 j_4} | ^2 = 1$.

~

Nonzero scalar multiplication has no incidence on the entanglement nature of $\ket{\Psi} \in \calH$. Therefore, we can consider quantum states as points in the projective space $\PP(\calH) = \PP^{15}$. The set of separable states correspond to the set of factorized tensors, i.e. states that can be written $\ket{\Psi_{Sep}} = \ket{v_1} \otimes \ket{v_2} \otimes \ket{v_3} \otimes \ket{v_4} $, with $v_i = \alpha_i \ket{0} + \beta_i \ket{1} \in \CC^2$. 

~

If we look at the projectivization of the set of separable states, we retrieve an algebraic variety called the Segre embedding of the product of four projective lines $\PP(\CC^2)=\PP^1$, defined as the image of the Segre map: 

\begin{equation}
Seg :  \left\{\begin{array}{l} 
	~~~~~~ \PP^1 \times \PP^1 \times \PP^1 \times \PP^1 ~~~~~~~~ \to ~~~~~~~~~~~~~~~~\PP^{15}= \PP(\mathcal{H})\\
	~~~~~ ([v_1],[v_2],[v_3],[v_4]) ~~~~~~~~~ \mapsto ~~~~~~~~~~~ [v_1 \otimes v_2 \otimes v_3 \otimes v_4]
	  \end{array}\right. ,
\end{equation}

where $[v_i]$ refers to the projectivization of the vector $v_i$. When we work over $\PP(\calH)$, the group SLOCC corresponds to $G = SL_2(\CC) \times  SL_2(\CC) \times  SL_2(\CC) \times  SL_2(\CC)$.  The variety of separable states also corresponds to the orbit of highest weight vector, which can be chosen to be $\ket{0000}$, and will be denoted by $X$ such that 

\begin{equation}
X = Seg(\PP^1 \times \PP^1 \times \PP^1 \times \PP^1) = \PP(G \cdot \ket{0000}) ~ .
\end{equation}

Because $X$ is a $G$-orbit, auxiliary varieties, built from the knowledge of $X$, can be used to stratify the ambiant space by $G$-invariant varieties. This idea of using algebraic geometry to describe entanglement classes has been investigated several times in the past 15 years \cite{heydari, brylinski,HLT,HLT2,HLT3,sanz,sawicki1,sawicki2}. For instance in \cite{Miyake1,Miyake2,Miyake3}, the dual variety of $X$ was introduced to separate different classes of entanglement. Let us recall the definition of the dual of $X$,

\begin{equation}
X^*=\overline{\{H\in \PP(\mathcal{H}^*), \exists x\in X, T_xX\subset H\}},
\end{equation}

where $T_x X$ denotes the tangent space to $X$ at $x$. The dual variety parametrizes the hyperplanes tangent to $X$. In the $4$-qubit case the defining equation of the dual of $X$ is nothing but the so-called $2\times 2\times 2\times 2$ Cayley H
yperdeterminant \cite{LT}. In \cite{HLT3} the correspondence between the singular loci of the $2\times 2\times 2\times 2$ hyperdeterminant and 
the Verstraete and al's normal form was established (Figure \ref{onion}).
In the next section we explain the algorithm that we will use to identify the normal form of a given four qubit state. Geometrically this algorithm allows one to know to which strata of the singular locus of the hyperdetermiant a given state $\ket{\Psi}$ belongs. We also reproduced (Figure \ref{3onion}) the orbit stratification of the 3-qubit classification as we will give illustrative examples regarding this case in Section \ref{3qubit_shor_eett} and Appendix \ref{appendixC}. As pointed out by Miyake \cite{Miyake1}, the stratification of the orbits in the 3-qubit case can also be described in terms of singularities of the $2\times 2\times 2$ Cayley Hyperdeterminant.

\begin{figure}[!h]
 \includegraphics[width=8cm]{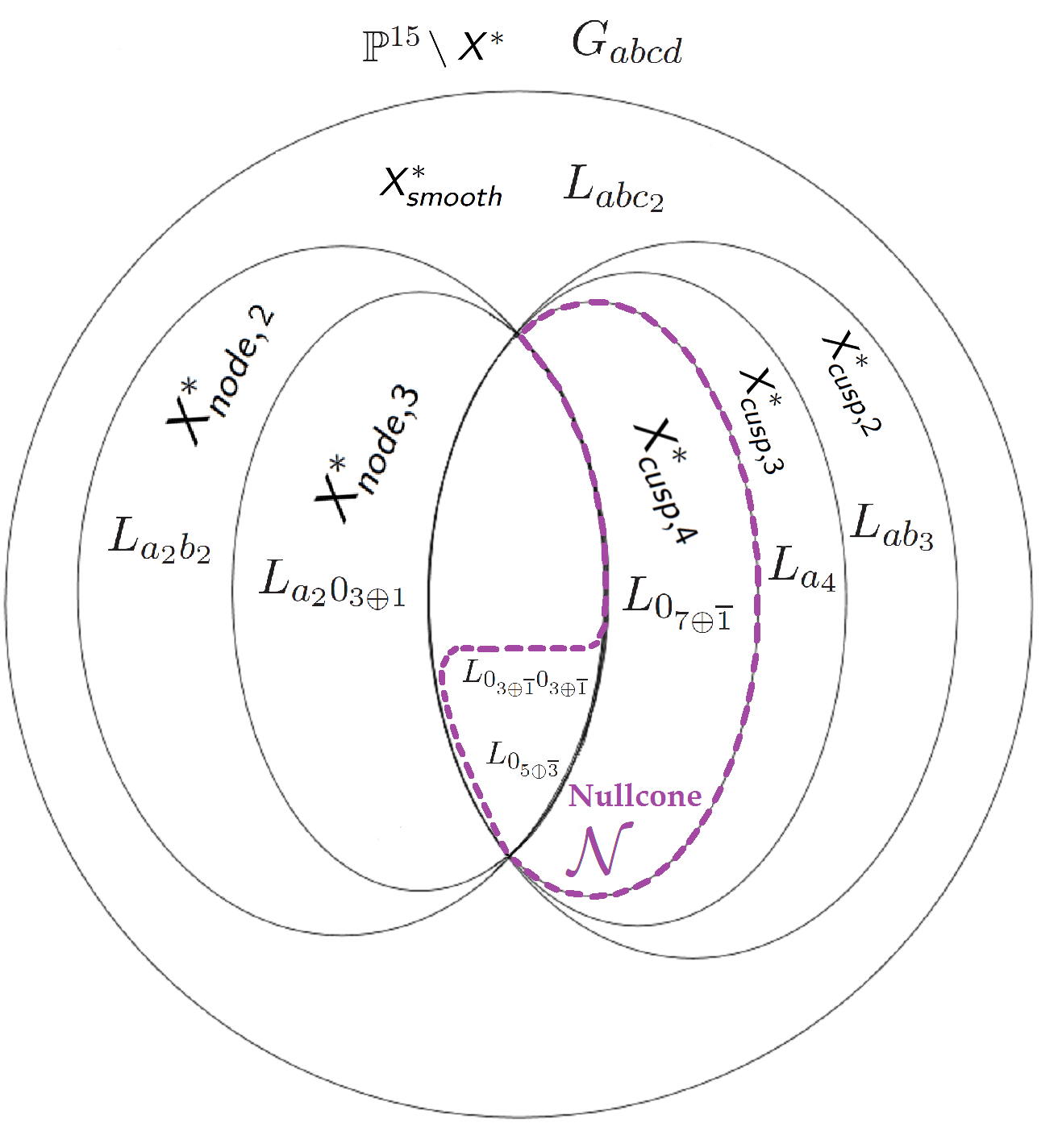}
\caption{Onion like structure of the entanglement in the 4-qubit case: The $G_{abcd}$ states do not belong to the dual variety $X^*$ for $a,b,c,d$ generic while states of type $L_{abc_2}$ corresponds to smooth point of $X^*$ for $a,b,c$ generic. The other states are points of the singular locus of $X^*$. The node components refer to states that corresponds to hyperplane with several points of tangency while the cusp components correspond to states/hyperplanes with one point of tangency of higher order \cite{HLT3}.}
\label{onion}
 \end{figure}
 
\begin{figure}[!h]
 \includegraphics[width=7cm]{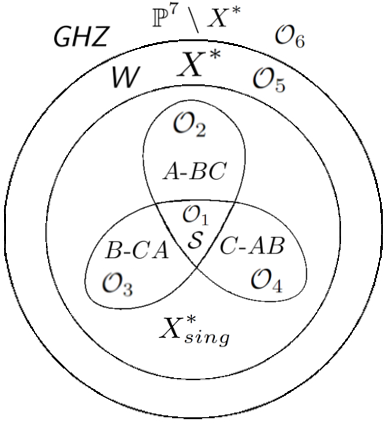}
\caption{Onion-like structure of the 3-qubit classication: The number of orbits is finite $\mathcal{O}_1,\dots,\mathcal{O}_6$ and the closures of each define algebraic varieties. For instance $\overline{\mathcal{O}_6}=\PP^7$ and $\overline{\mathcal{O}_5}=X^*$ (see \cite{HLT}).}
\label{3onion}
 \end{figure}
 \begin{rem}
  Other auxiliary varieties can be built from the knowledge of $X=Seg(\PP^1\times\PP^1\times\PP^1\times\PP^1)$ that are meaningful in terms of entanglement. Let us mention the secant variety of $X$, $\sigma(X)=\overline{\cup_{x,y\in X} \PP^1_{xy}}$ and the tangential variety $\tau(X)=\cup_{x\in X} T_x X$. When $X=Seg(\PP^1\times\PP^1\times\PP^1\times\PP^1)$, one has $\sigma(X)=\PP(\overline{\SLOCC.\ket{GHZ}_4})$ and $\tau(X)=\PP(\overline{\text{SLOCC}.\ket{W_4}})$ where the overline denotes the Zariski closure. See \cite{HLT,HLT2, HLT3} for more details about geometic constructions based on auxilary varieties to describe entanglement.
 \end{rem}

\subsection{Algorithms based on invariants}

We now introduce two algorithms based on a family of invariants and covariants that have been proposed in \cite{HLT2,HLT3}  to discriminate both the Verstaete \textit{et al.} families and the states that belong to the nullcone (states that vanish all SLOCC invariant polynomials).
These algorithms have also been used recently in \cite{enriquez,bataille}. Recall that one can associate to each 4-qubit state $\ket{\Psi}$ a quadrilinear form $A$:

\begin{equation}
\ket{\Psi} = \sum_{i,j,k,l \in  [\![ 0,1 ]\!]} a_{ijkl} ~ \ket{ijkl}, 
\end{equation}
\begin{equation}
A = \sum_{i,j,k,l \in  [\![ 0,1 ]\!]} a_{ijkl} \cdot x_iy_jz_kt_l ~ .
\end{equation}

The ring of  invariant polynomials for 4-qubits can be generated thanks to the four polynomial invariants $H$, $L$, $M$ and $D_{xy}$ \cite{LT}, defined as 

\begin{equation}
H = a_{0000}a_{1111} - a_{1110}a_{0001} - a_{0010}a_{1101} + a_{1100}a_{0011} - a_{0100}a_{1011} + a_{1010}a_{0101} + a_{0110}a_{1001} - a_{1000}a_{0111},
\end{equation}
\begin{equation}
L = \begin{vmatrix}
a_{0000} & a_{0010} & a_{0001} & a_{0011} \\ 
a_{1000} & a_{1010} & a_{1001} & a_{1011} \\ 
a_{0100} & a_{0110} & a_{0101} & a_{0111} \\ 
a_{1100} & a_{1110} & a_{1101} & a_{1111}  
\end{vmatrix},
\end{equation}
\begin{equation}
M =  \begin{vmatrix}
a_{0000} & a_{0001} & a_{0100} & a_{0101} \\ 
a_{1000} & a_{1001} & a_{1100} & a_{1101} \\ 
a_{0010} & a_{0011} & a_{0110} & a_{0111} \\ 
a_{1010} & a_{1011} & a_{1110} & a_{1111}  
\end{vmatrix},
\end{equation}
\begin{equation}
N = - L - M ,
\end{equation}
\begin{equation}
b_{xy} = \det \Big(\frac{\partial^2 A}{\partial z_i \partial t_j}\Big)_{i,j \in  [\![ 0,1 ]\!]} = [x_0^2,x_0x_1,x_1^2] ~ B_{xy} \begin{bmatrix} y_0^2 \\ y_0y_1 \\ y_1^2 \end{bmatrix} ~ .
\end{equation}
We determine the $3 \times 3$ matrix $B_{xy}$ by retrieving the coefficients in front of the terms $x_ix_jy_ky_l$ with $i,j,k,l \in  [\![ 0,1 ]\!]$ in the quadratic form $b_{xy}$.
\begin{equation}
D_{xy} = - \det (B_{xy}) ~ .
\end{equation}

~

One also needs to define three quartics , which coefficients are defined using the four qubit invariants:

\begin{equation}
\mathcal{Q}_1(\ket{\Psi}) = x^4 - 2H\cdot x^3y + (H^2 + 2L+4M) \cdot x^2y^2 + (4D_{xy} - 4H(M+\frac{1}{2} L) ) \cdot xy^3 + L^2 \cdot y^4,
\end{equation}
\begin{equation}
\mathcal{Q}_2(\ket{\Psi}) = x^4 - 2H\cdot x^3y + (H^2 - 4L -2M) \cdot x^2y^2 + (-2MH + 4D_{xy}) \cdot xy^3 + M^2 \cdot y^4,
\end{equation}
\begin{equation}
\mathcal{Q}_3(\ket{\Psi}) = x^4 - 2H\cdot x^3y + (H^2 + 2L -2M) \cdot x^2y^2 - (2LH +2MH  - 4D_{xy} ) \cdot xy^3 + N^2 \cdot y^4 ~ .
\end{equation}

~

We define also two invariant polynomials of the quartics, $I_2$ and $I_3$, as follows:

\begin{equation}
I_2(\mathcal{Q}_1) = I_2(\mathcal{Q}_2)= I_2(\mathcal{Q}_3) =\frac{4}{3} L^2 - \frac{4}{3} H^2 M  +  \frac{4}{3} LM  +  \frac{4}{3}M^2 + 2 H D_{xy} + \frac{1}{12} H^4 - \frac{2}{3}H^2 L,
\end{equation}
\begin{equation}
\begin{split} I_3(\mathcal{Q}_1) = I_3(\mathcal{Q}_2)= I_3(\mathcal{Q}_3)  = \frac{8}{27} L^3 - \frac{1}{216} H^6 - \frac{8}{27} M^3-  \frac{1}{6} Dxy H^3 + \frac{4}{3}H M Dxy - \frac{5}{9} H^2 M L + \frac{2}{3} H L Dxy \\ 
- \frac{2}{9} H^2 L^2 - \frac{5}{9} H^2 M^2 - {D_{xy}}^2 + \frac{4}{9} L^2 M + \frac{1}{18} H^4 L + \frac{1}{9} H^4 M - \frac{4}{9} L M^2
\end{split} ~ .
\end{equation}

~

The hyperdeterminant of a 4-qubit state $\ket{\Psi}$ can be seen as the discriminant of one of the quartic $Det_{2222}=\Delta(\mathcal{Q}_i)$, which is equal also to $Det_{2222}={I_2}^3 - 27{I_3}^2$. If we want to investigate the nature and multiplicity of the roots of the quartics, we have to use two other covariant polonomials, the Hessian and the Jacobian of the Hessain, defined as follows:

\begin{equation}
Hess(Q) =  \begin{vmatrix}
\frac{\partial^2 Q}{\partial x^2 } & \frac{\partial^2 Q}{\partial x \partial y } \\ 
\frac{\partial^2 Q}{\partial y \partial x } & \frac{\partial^2 Q}{\partial y ^2}
\end{vmatrix},
\end{equation}
\begin{equation}
T(Q) = \begin{vmatrix}
\frac{\partial Q}{\partial x } & \frac{\partial Q}{ \partial y } \\ 
\frac{\partial Hess(Q)}{\partial x } & \frac{\partial Hess(Q)}{ \partial y }
\end{vmatrix} ~ .
\end{equation}

~

The algorithms presented in \cite{HLT2,HLT3} consider two main cases. The first case is when the state $\ket{\Psi}$ belongs to the nullcone. The nullcone is defined as the set of states which annihilate all invariant polynomials. In practice, we can define the projectivization $\mathcal{N}$ of the nullcone as

\begin{equation}
\mathcal{N} = \{ \ket{\Psi} \in \PP(\calH) ~/ ~ H(\ket{\Psi}) = L(\ket{\Psi}) = M(\ket{\Psi}) = D_{xy}(\ket{\Psi}) = 0 \} ~ .
\end{equation}

The nullcone contains $31$ SLOCC-orbits. If one allows permutations of the four qubits by the symmetric group $\mathcal{S}_4$, those $31$ orbits can be grouped in $8$ non-equivalent strata of orbits $Gr_1, Gr_2,\dots, Gr_8$ forming a nested sequence (the orbit closures of the strata $Gr_{i+1}$ containing orbits of $Gr_i$). In particular $Gr_1$ only contains the orbit of separable states.

~

To distinguish between the different stratas of the nullcone, we will use these polynomials defined as the sum or product of covariants:

\begin{equation}
P_B = B_{2200} + B_{2020} + B_{2002} + B_{0220} + B_{0202} + B_{0022} ,
\end{equation}
\begin{equation}
P_C ^1= C_{3111} + C_{1311} + C_{1131} + C_{1113},
\end{equation}
\begin{equation}
P_C ^2= C_{3111} \cdot C_{1311} \cdot C_{1131} \cdot C_{1113},
\end{equation}
\begin{equation}
P_D ^1= D_{4000} + D_{0400} + D_{0040} + D_{0004},
\end{equation}
\begin{equation}
P_D ^2= D_{2200} + D_{2020} + D_{2002} + D_{0220} + D_{0202} + D_{0022}, 
\end{equation}
\begin{equation}
P_F = F^1_{2220} + F^1_{2202} + F^1_{2022} + F^1_{0222},
\end{equation}
\begin{equation}
P_L= L_{6000} + L_{0600} + L_{0060} + L_{0006} ~ .
\end{equation}

~

We can in fact decide to which strata a given from belongs by evaluating the vector $V$ defined in Equation (\ref{vect_cov}). When the evaluated value is non-zero, we replace the value by  '1'. The elements in the vector $V$ will thus only take binary values.
\begin{equation}\label{vect_cov}
V = [A,P_B,P_C^1,P_C^2,P_D^1,P_D^2,P_F,P_L] ~ .
\end{equation}

 ~

In Algorithm \ref{algo1}, we reproduce \cite{HLT2} a procedure that takes in input a 4-qubit state and the return the corresponding strata in the nullcone, or an error if the state is not nilpotent. 
 
\begin{algorithm}[H]
\caption{NilpotentType \cite{HLT2}}\label{algo1} 
\begin{algorithmic}
\Require $Y$ an array of size 16, the 4-qubit state
\Ensure The nullcone type $\mathcal{N}$ of $Y$

~

\If{isInNullcone($Y$)}
\State vectCov $\leftarrow [A,P_B,P_C^1,P_C^2,P_D^1,P_D^2,P_F,P_L]$
\State eval $\leftarrow$ evaluate(vectCov,$Y$)
\If{eval $= [0, 0, 0, 0, 0, 0, 0, 0]$}
\State \textbf{return} $Gr_0$
\ElsIf{eval $= [1, 0, 0, 0, 0, 0, 0, 0]$}
\State \textbf{return} $Gr_1$
\ElsIf{eval $= [1, 1, 0, 0, 0, 0, 0, 0]$}
\State \textbf{return} $Gr_2$
\ElsIf{eval $= [1, 1, 1, 0, 0, 0, 0, 0]$}
\State \textbf{return} $Gr_3$
\ElsIf{eval $= [1, 1, 1, 0, 1, 0, 0, 0]$}
\State \textbf{return} $Gr_4$
\ElsIf{eval $= [1, 1, 1, 1, 0, 0, 0, 0]$}
\State \textbf{return} $Gr_5$
\ElsIf{eval $= [1, 1, 1, 1, 1, 1, 0, 0]$}
\State \textbf{return} $Gr_6$
\ElsIf{eval $= [1, 1, 1, 1, 1, 1, 1, 0]$}
\State \textbf{return} $Gr_7$
\ElsIf{eval $= [1, 1, 1, 1, 1, 1, 1, 1]$}
\State \textbf{return} $Gr_8$
\EndIf
\Else
\State \textbf{return} ''Y does not belong to the nullcone''
\EndIf
\end{algorithmic}
\end{algorithm}

~

If the form is not nilpotent (does not belong to the nullcone), one needs to use other covariants computed in \cite{HLT2} to distinguish the different families. With the notations of \cite{HLT2} these are:

\begin{equation}
\mathcal{L} = L_{6000} + L_{0600} + L_{0060} + L_{0006},
\end{equation}
\begin{equation}
\mathcal{K}_5 = K_{5111} + K_{1511} + K_{1151} + K_{1115},
\end{equation}
\begin{equation}
\mathcal{K}_3 = K_{3311} + K_{3131} + K_{3113} + K_{1331} + K_{1313} + K_{1133},
\end{equation}
\begin{equation}
\mathcal{G} = G_{3111}^2 + G_{1311}^2 + G_{1131}^2 +G_{1113}^2,
\end{equation}
\begin{equation}
\overline{\mathcal{G}} = G_{3111}^1  G_{1311}^1  G_{1131}^1 G_{1113}^1,
\end{equation}
\begin{equation}
\mathcal{D} = D_{4000} + D_{0400} + D_{0040} + D_{0004},
\end{equation}
\begin{equation}
\mathcal{H} = H_{2220} + H_{2202} + H_{2022} + H_{0222},
\end{equation}
\begin{equation}
\mathcal{C} = G_{1111}^2 ~ .
\end{equation}

~

In order to determine the Verstraete \textit{et al.} type or family of a given state, we will use Algorithm \ref{algo2} reproduced in Appendix \ref{appendixB}. It is based on the roots of the three quartics and their multiplicities, and also based on the evaluation of the covariants on the form defined by the state.

\section{Entanglement in Grover's algorithm}\label{Resultgrover}

\subsection{Grover's algorithm}

In 1996, Lov Grover discovered a quantum algorithm for searching elements in a large and non-ordered database \cite{Grover} with a complexity of $\mathcal{O}(\sqrt{N})$ request, instead of the classical $\mathcal{O}(\frac{N}{2})$, with $N$ being the database size.

 ~
 
 There are four main steps in this algorithm, and they correspond to the following operations : Initialization, Oracle operator, Diffusion operator, Measurement. Depending on the number $|\mathcal{S}|$ of searched elements and the size $N$ of the database, the Oracle and Diffusion operators should be repeated several times. 
		
  \begin{figure}[!h]
\begin{center}
  \includegraphics[width = 12.0cm]{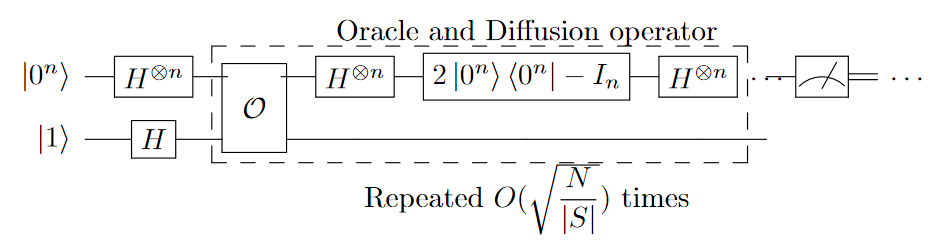}
  \caption{ Grover's algorithm as a quantum circuit.}\label{grovercircuit}
\end{center}
\end{figure}

~

The first register of $n$-qubits is initialized in the superposed state $\ket{+}^{\otimes n}$, then, depending on the searched elements, the Oracle gate (denoted by $\mathcal{O}$ in Figure \ref{grovercircuit}) will mark these elements (with a minus sign), and thus modify the  state of the first register. Then we will apply the Diffusion gate, in order to amplify the module of the amplitudes of the searched elements.

 ~

In our work, we focus on the different entanglement classes that can be generated by Grover's algorithm, in the four-qubit case. In other words, we want to know what are the different Verstraete \textit{et al.} families that can be reached, starting from the state $\ket{+}^{\otimes n}$ and applying Oracle/Diffusion gates by varying the number and indices of searched elements in Grover's algorithm. 

\subsection{Previous work}

Grover's algorithm is one of the most famous quantum algorithm in the litterature, because it provides better performances for searching elements in a database than what is done in the classical case. It has been proved that entanglement is involved during the steps of the algorithm, but his role and his nature has not yet been fully understood. In fact, Braunstein and Pati was the first to prove in 2000 the precense and the necessity of entanglement in Grover's algorithm \cite{Braunstein}.

~

In 2002, Biham, Nielsen and Osborne \cite{Biham} introduced an entanglement monotone derived from Grover's algorithm. Given as input a pure quantum state, the author study the maximization of $P_{max}$ the probability that the algorithm suceeds, and this, under local unitary operations. This defines an entanglement monotone, and in fact it defines the well known Groverian Measure of Entanglement. A recent 

~

The same year, Forcer \textit{et al.} \cite{Forcer} examined the roles played by superpostition and entanglement in quantum computing. The analysis is illustrated by a discussion on a classical implementation of Grover's algorithm. The absence of multi-particle entanglement leads to exponentially rising ressources for implementing such quantum algorithms. They conclude that multi-particle entanglement is the key property of quantum systems that provides the remarkable power of quantum computers. 

~

Biham, Shapira and Shimoni have analyzed, in 2003, the dynamics of Grover's algorithm while initializing the algorithm with an arbitrary $\ket{\Psi}$ pure state, instead of the $\ket{0}$ \cite{Biham2}. The authors showed that the optimal measurement time is the same with both initial states, in the case of the same number of marked elements. Biham \textit{et al.} generalized the Groverian entanglement measure to the case of several marked elements, previously limited to a single marked state. According to the authors, as long as $r << N$, with $r$ the number of marked elements and $N$ the size of the database, the Groverian measure is independent of $r$.

~

In 2004, Orus and Latorre investigated the scaling of entanglement in adiabatic version of Grover's algorithm \cite{Orus3}, precising that the Von Neumann entropy remains a bounded quantity regardless of the size of the system, even at they called the critical point. More precisely, "the maximum of entropy approches 1 as a square root in the size of the system, which is the typical Grover scaling factor".

~

In 2005, Fang \textit{et al.} \cite{Fang} studied the degree of entanglement present in a multi-qubit register during the algorithm process, under the formalism of density matrices. The authors analyzed the variations of the concurrence and Von Neumann entropy for the one and two-qubit reduced density matrix from an $n$-qubit register, in terms of the number of iterations, and this for one element marked. With the computed concurrence, they found that it can be related to the probability of success. Besides, they observed that the concurrence reach a maximum value at approximately half of the optimal number of iterations.

~

In 2008, Iwai \textit{et al.} \cite{Iwai} published an article dealing with the measure of entanglement with respect to bipartite partition of $n$-qubit, defined and studied from the viewpoint of Riemann geometry in a previous work of the first author \cite{Iwai2}. This previous work permits to establish the distance between the maximally entangled states and the separable ones. In this work, the authors determine the set of maximally entangled states nearest to a separable state as we can encounter at the beginning of Grover's algorithm. They confirmed the fact that while the initial and the marked elements are separable the algorithm generates a sequence of entangled states. 

~

In 2012, Wen and Cao \cite{Wen} explored the behaviors of multipartite entanglement in Grover's algorithm by looking at the adiabatic version of the quantum search algorithm. The authors calculated the Von Neumann entropy of all possible bipartite divisions of the system. This leaded to a new measure of multipartite entanglement, generalizing the measure introduced by Meyer and Wallach in \cite{Meyer}. They showed the existence and the evolution of multipartite entanglement during the process. Besides, similar scaling behaviors were observed both in bipartite and multipartite case (beginning  from zero to a maximum value at a critical point, then back to zero again). The symmetry behavior of the entanglement during the adiabatic evolution is also shown.  

~

Rossi \textit{et al.} studied in 2013 the scale invariance of entanglement dynamics in Grover's algorithm \cite{Rossi}. They calculated the amount of entanglement of the quantum states generated by Grover's algorithm, through the computational steps. They used for that a numerical measure of entanglement named GME (Geometric Measure of Entanglement). They showed that multi-partite entanglement is always present during the algorithm, and studied it in function of the number of iterations of the algorithm, for a fixed number of qubits. They found the maximum of entanglement to be at some remarkable position, using the GME, and this, for specific values of marked elements.

~

The same year, Chakraborty \textit{et al.} published a work \cite{Chakraborty} in the same philosophy as the work of Rossi \textit{et al.} \cite{Rossi}. The Geometric Measure of Entanglement has been used to quantify and analyse entanglement across iterations of the algorithm. The authors investigated how entanglement varies when the number of qubits and marked elements increases. The first result is that the behavior of the maximum value of entanglement is monotonous with the number of qubits. The second main result is that, for a given number of qubits, a change in marked elements alters the amount of entanglement. 

~

One year later, Rossi \textit{et al.} came back with another work involving Grover states and hypergraphs \cite{Rossi2}. Numercial computation of the GME as a function of the number of qubit was made, when they only considered the first iteration of the algorithm. For different cases under consideration, the curves for one and two marked elements show the same behavior, i.e. an exponential decay. A link between the initial states of the algorithm and hypergraphs is established, giving a more pictural representation, and permetting to highlight some entanglement properties of these states, such as biseparability and the presence of genuine multipartite entanglement. 

~

In 2015, Qu \textit{et al.} \cite{Qu} investigated multipartite entanglement by using separable degree as a qualitative measure. On another hand, they also used a quantitative measure of entanglement introduced by Vidal \cite{Vidal}, namely the Schmidt number. These qualitative and quantitive tools permited to study the entanglement dynamics of Grover's search algorithm. The results, depending on the step in the algorithm, confirm that after first iterations fully entangled states appear in the algorithm. 

~

Recently, in 2016, Ye \textit{et al.} \cite{Ye} published a work dealing with the influence of static imperfections on quantum entanglement and quantum discord. They used concurrence to investigate the behavior of entanglement. Static imperfections can break quantum correlations, according to the authors. in fact, for every weak imperfections, the quantum entanglement exhibit periodic behavior, while the periodicity will be destroyed with stronger imperfections. They confirmed therefore the periodic property of entanglement involved in Grover's algorithm. 

~

The same year, in 2016, the authors of the present work, with Ismael Nounouh, investigated the entanglement nature of quantum states generated of Grover's algorithm by means of algebraic geometry \cite{HJN}. The authors established a link between entanglement of states generated by the algorithm and auxiliary algebraic varieties built from the set of separable states. We were able to propose a qualitative interpretation of the earlier results, such as the work of Rossi \textit{et al.} \cite{Rossi,Rossi2}. Some examples were investigated, such as the 3-qubits case. 

~

In 2017, Pan \textit{et al.} also studied the GME scale invariant property \cite{Pan}. Starting from the work of Rossi \textit{et al.} \cite{Rossi}, the authors showed that the entanglement dynamics in Grover's algorithm is not always scale invariant. They showed that after the turning point, the GME is not necessarily scale invariant, and then depend on the number $n$ of qubits and the marked elements. Some examples, when the searched elements form separable states, GHZ or W states, were investigated to confirm the proposed results. 

~

In this work, we continue in the same direction as our previous work \cite{HJN}, trying to bring explanation to behaviors observed by the differents authors in the domain. More precisely, we study the example of 4-qubits case, detailing what types of entanglement can be generated by Grover's algorithm in function of the marked elements, and recovering some results established in our previous paper.

\subsection{The 4-qubit case}

We investigate entanglement nature of states involved in Grover's algorithm by varying the marked elements, and by determining the corresponding Verstraete \textit{et al.} family or the nullcone strata. 
%The general interesting question is: which entangled classes can be obtained by Grover's algorithm ? 
We list bellow all Verstraete \textit{et al.} normal forms and the strata of the nullcone reached. $|\mathcal{S}|$ denotes the number of marked elements.

\begin{itemize}
	\item \textbf{Standard regime ($|\mathcal{S}|<\frac{N}{4}$):}
	\begin{itemize}
		\item For $|\mathcal{S}|=1$, we always reach the $G_{00cc}$ orbit, as expected.
		\item For $|\mathcal{S}|=2$, the states generated by Grover's algorithm belong to $G_{abc0}$, $L_{00c_2}$, $L_{ab0_2}$, $Gr_8$ and $Gr_4$.
		\item For $|\mathcal{S}|=3$, we can obtain the orbits $G_{abc0}$, $L_{00c_2}$, $L_{aa0_2}$, $L_{0_2b_2}$ and $L_{a_20_{3\oplus \overline{1}}}$.
	\end{itemize}
	\item \textbf{Critical case ($|\mathcal{S}|=\frac{N}{4}$):}  For $|\mathcal{S}|=4$, which is the critical case (all amplitudes, except the marked one, are sent to zero, and the algorithm converges after one iteration), we can reach all the states that can be written as a sum of 4 basis states: $G_{00cc}$, $G_{a000}$, $G_{ab00}$, $L_{00c_2}$, $L_{aa0_2}$, $L_{a00_2}$, $L_{0_2b_2}$, and from $Gr_8$ to $Gr_1$.
	\item \textbf{Exceptional case (}$|\mathcal{S}|>\frac{N}{4}$, this is a not the standard case of application of Grover because the number of marked elements is not small compared to $N${\bf):}
	\begin{itemize}
		\item For $|\mathcal{S}|=5$, the orbits $G_{abc0}$, $G_{ab00}$, $L_{00c_2}$, $L_{aa0_2}$, $L_{ab0_2}$, $L_{a_2b_2}$, $L_{0_2b_2}$ and $L_{a_20_{3\oplus \overline{1}}}$ can be obtained. 
		\item For $|\mathcal{S}|=6$, Grover's algorithm can generate states that belong to $G_{abcd}$, $G_{abc0}$, $G_{ab00}$,  $L_{00c_2}$, $L_{aa0_2}$, $L_{ab0_2}$, $L_{a_2b_2}$, $L_{0_2b_2}$, $L_{a_4}$, $L_{a_20_{3\oplus \overline{1}}}$, $Gr_8$ and $Gr_4$.
		\item For $|\mathcal{S}|=7$, one can reach the following orbits: $G_{abcd}$, $G_{abc0}$, $G_{00cc}$, $G_{ab00}$, $L_{00c_2}$, $L_{aa0_2}$, $L_{ab0_2}$, $L_{a_2b_2}$, $L_{0_2b_2}$, $L_{a_20_{3\oplus \overline{1}}}$.
		\item For $|\mathcal{S}|=8$, the generated states belong to $G_{abc0}$, $G_{00cc}$, $G_{ab00}$, $L_{00c_2}$, $L_{aa0_2}$, $L_{a_2b_2}$, $L_{0_2b_2}$, $L_{a_20_{3\oplus \overline{1}}}$, $Gr_8$, $Gr_7$, $Gr_4$, $Gr_2$ and $Gr_1$.
	\end{itemize}
\end{itemize}

 ~
 
In the case of one marked element, we always belong to the secant variety $\sigma_2(X)$, which is the equivalence class of the $\ket{GHZ_4}$ state, also know as the set of tensors of rank 2, as it was demonstrated by the authors in \cite{HJN}. 

~

We want to point out that, as it was mentionned in \cite{HJN}, the state $\ket{W_4}$ (which belongs to the $Gr_5$ orbit) is not reached by the states generated by Grover's algorithm, except in the critical case. The other orbits that are not reached, except in the critical case, are $Gr_6$ and $Gr_3$. Moreover, we remark that the generic families $L_{abc_2}$ and $L_{ab_3}$ are not reached by states generated by Grover's algorithm, for 4-qubits. We present in Appendix \ref{appendixA} examples of marked elements  generating a given family or orbit.

~

Besides, if we plot the variation, as a function of $k$ (number of iterations), of the absolute value of the 4-qubits hyperdeterminant evaluated on the state $\ket{\Psi_k}$ (the state generated by Grover's algorithm at the $k$-th iteration), one obtains two different curves illustrating the periodicity of the algorithm (Figs. \ref{hyperdet1},\ref{hyperdet2} ). Since this $2\times 2 \times 2 \times 2$ hyperderminant vanishes for states in the dual variety, plotting is only relevant for (sub-)families like $G_{abcd}$ and $G_{abc0}$.

  \begin{figure}[!h]
\begin{center}
  \includegraphics[width = 9cm]{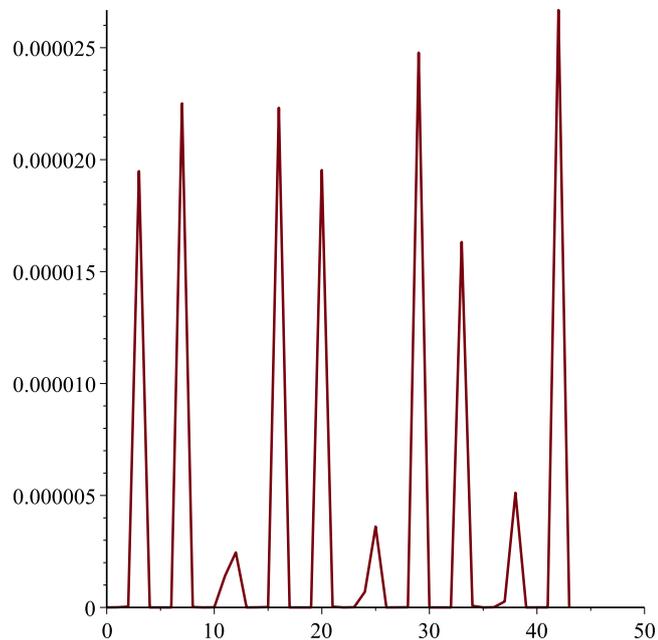}
  \caption{Evolution of the absolute value of the hyperdeterminant of 4-qubits in function of the number of iteration of Grover's algorithm, for the set of marked elements $S = \{ \ket{0000},\ket{1111}\}$ (states in $G_{abc0}$).}\label{hyperdet1}
\end{center}
\end{figure}

~

  \begin{figure}[!h]
\begin{center}
  \includegraphics[width = 9cm]{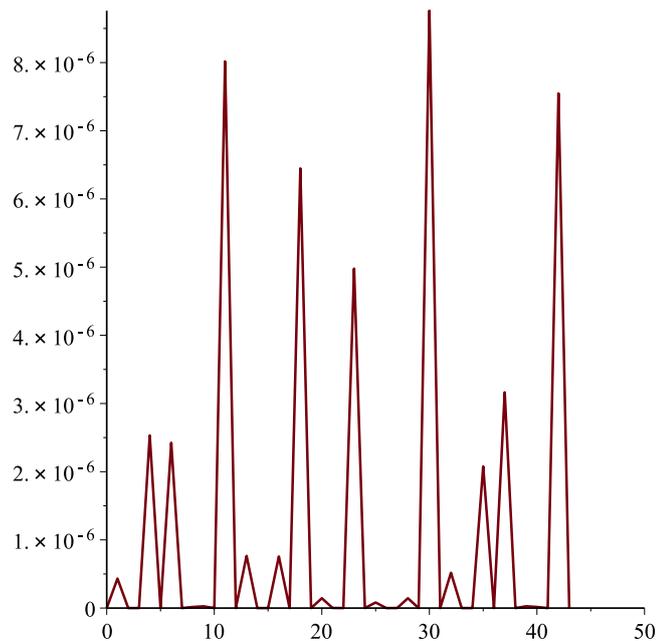}
  \caption{Evolution of the absolute value of the hyperdeterminant of 4-qubits in function of the number of iteration of Grover's algorithm, for the set of marked elements $S = \{ \ket{0000}, \ket{0001}, \ket{0010} ,\ket{0101}, \ket{1010}, \ket{1111} \}$ (states in $G_{abcd}$).}\label{hyperdet2}
\end{center}
\end{figure}

\section{Entanglement in Shor's algorithm} \label{shor_algo}

\subsection{Shor's algorithm}

In 1995, Peter Shor revealed a new quantum algorithm for integer factorization \cite{Shor}. The particularity of Shor's algorithm is that on a quantum computer, to factor an integer $M$, it runs in polynomial time. Thus, it can be used to break public-key cryptography schemes such as the widely used RSA scheme.

~

The algorithm is composed of several steps (pre-processing, order-finding, post-processing), but the only quantum part of the algorithm concerns the order-finding algorithm, so we will only focus on this part. 

~

We define a function $f : \mathcal{H}_N \to  \mathcal{H}_N$ with $\mathcal{H}_N = \{\ket{x} ~/~ x\in\mathbb{N},x<N\}$. We say $f$ is periodic of period $r < N$, when :
   \begin{equation}
   \forall x\in [\![ 0,N-r-1 ]\!] \textrm{, } f(\ket{x+r}) = f(\ket{x}) ~ .
   \end{equation}

For the period finding problem, one defines the periodic function $f$ as, $f(\ket{x}) = \ket{a^x \mod M}$, which takes the state $\ket{x}$ in parameter and return the state $\ket{a^x\text{ modulo }M}$. One also defines the related quantum gate $U_f : (x,y) \rightarrow (x,y\oplus f(x))$, with $x$ called the data register.

~
 
The order-fiding algorithm, which can be represented by a quantum circuit (see Figure \ref{order_fig}), is used to determine the order $r$ of $a$ modulo $M$, i.e. the smallest integer $r\in \mathbb{N}^*$ such as $a^r \equiv 1 ~[M]$.

~

We start by initializing the first register to the parallelized state $\ket{+}^{\otimes n}$, and the second register to the qubit $\ket{0}$. Then, we apply the $U_f$ gate and we make a measure on the second register. By knowing that the function $f$ is periodic, we will retrieve a periodic state for the first register. Finally, we apply the Quantum Fourier Transform to the first register, in order to extract some information about the period $r$, and we make a measurement on the first register. 

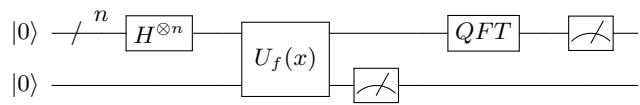
\begin{figure}[!h]
   \begin{align*}
   \Qcircuit @C=1em @R=.7em {
   \lstick{\ket{0}}  & {/}\qw & \ustick{n}\qw & \gate{H^{\otimes n}}   & \qw  & \multigate{1}{U_f(x)} & \qw & \qw  &\gate{QFT} &\qw &\meter &\qw\\
   \lstick{\ket{0}}  & \qw    & \qw           & \qw    &\qw   & \ghost{U_f(x)} & \meter & \qw  & \qw& \qw &\qw &\qw\\
   }
   \end{align*}
   \caption{Quantum circuit of order-fiding algorithm.}
   \label{order_fig}
   \end{figure}

~

In our work, we focused on the nature of entanglement for the state obtained after applying the $U_f$ gate and measuring the second register (which give us a periodic state), and after applying the Quantum Fourier Transform to the previous periodic state.

\subsection{Previous work}

Shor's algorithm is a quantum algorithm that offer an exponential speed-up over its classical counterparts. %However, the causes of this exponential speedup are not yet fully understood.

~

At the beginning of the 2000's, Jozsa \textit{et al.} suggested that quantum entanglement is playing a major role in quatum computational efficiency \cite{Jozsa,Jozsa2}. It has been proven in fact that quantum entanglement is involved in Shor's algorithm, theoretically \cite{Parker,Jozsa2,Shimoni,Kendon,Most} and experimentally \cite{Lanyon,Lu}.

~

Most of the studies related to entanglement, in this algorithm, focused on the entanglement between the two quantum registers.  In 2001, Parker and Planio \cite{Parker} looked at the average bipartite entaglement by using the logarithmic negativity, as a measure of the entanglement, at each stage of the algorithm. The states of the algorithm are defined by the controlled-$U_{\alpha}$ operations (composing the modulo exponentiation). They proved that entanglement exists in the algorithm and that the amount of entanglement increases towards the end of the algorithm. They also showed that if one tries to reduce the entanglement by introducing more mixing into the control qubit, one reduces  the efficiency of the computation. 

~

Two years later, Jozsa and Linden published an article dealing with the role of entanglement in quantum computational speed-up \cite{Jozsa2}. They discussed the difference between classical and quantum computation, and when a quantum computation can be efficiently classically simulated. In particular, it has been proposed that if we cannot efficiently classically simulate a quantum algorithm in polynomial time, then quantum entanglement is involved in this quantum algorithm. It is the case of Shor's algorithm, and the presence of entanglement is proved by considering "arithmetic progression" states (equivalent to periodic states) and the fact that they are not $p$-blocked. 

~

In 2004, Orus and Latorre \cite{Orus3} studied the scaling of entanglement in Shor's algorithm, proving analytically that it makes use of an exponentially large amount of entanglement between the two registers after the modular exponentiation step. This implies the impossibility of an efficient classical simulation using the protocol proposed by Vidal in \cite{Vidal}.

~

In 2005, Shimoni \textit{et al.} used the Groverian measure of entanglement to characterize the quantum states generated by Shor's algorithm \cite{Shimoni}. At each step of the QFT process (after each controlled-phase gates, Hadamard gate doesn't affect entanglement) they evaluate the Groverian measure of entanglement, and this for general quantum states and for periodic states of Shor's algorithm. For random product states, they showed that entanglement remains the same during most of the steps, but for particular controlled-phase gate the amount of entanglement measured by the Groverian measure increases significatively. For periodic states, it was found that the Groverian measure of entanglement doesn't change essentially, and the changes that we can encounter for small number $n$ of qubits vanish exponentially with $n$.

~

Kendon and Munro published in 2006 an article "Entanglement and its role in Shor's algorithm" \cite{Kendon} where they investigate the entanglement involved in Shor's algorithm by decomposing the $U_f$ gate and considering the $QFT^{-1}$ as a single gate. They focused first on the entanglement between the first and second register 	and then studied the entanglement involved in the first register. A quantitative study of entanglement is done in this paper, and some measures of entanglement like entropy of subsystems (between the two registers), negativity and entanglement of formation (within the first register) are used. According to the authors, after the modular exponentiation, the entanglement between the two registers cannot decrease during the $QFT^{-1}$. Furthermore, "entanglement within the upper register can only be generated or shifted around, not decreased". The authors also pointed out that the closer the period $r$ is to a power of 2, the smaller the value of the difference in the average entropy $\Delta E_1$ before and after the $QFT^{-1}$. When $r$ is a power of 2, then the $QFT^{-1}$ is exact giving $\Delta E_1=0$ in all cases.

~

In 2007, some experimentations of Shor's algorithm were implemented. Lanyon \textit{et al.} \cite{Lanyon} implemented a compiled version of Shor's algorithm by using a photonic system. They proved the existence of entanglement within the algorithm via quantum state and process tomography, and that entanglement is involved in the arithmetic calculations. The same year, Lu \textit{et al.} \cite{Lu} implemented  Shor's factoring algorithm also using photonic qubits. The experimentation was made with 4-photonic-qubits, and they detected genuine multiparticle entanglement during the algorithm (between the first and second register). 

~

Three years later, Most, Shimoni and Biham published a work \cite{Most} related with entanglement of Periodic States, the Quantum Fourier Transform and Shor's algorithm. They pointed out the importance and role of the periodic states during the algorithm. They also analysed the entanglement of periodic states, involved in Shor's algorithm, and they looked at how these states are affected by the Quantum Fourier Transform. Some approximations were used in order to evaluate the Groverian measure of entangled periodic states. According to the authors, the QFT does not change the entanglement of periodic states, for a sufficiently large number of qubits. 

~

All these studies have investigated the entanglement by using entanglement measure, and more precisely a quantitative analysis of the entanglement. It permited to give a first idea about how entanglement can evoluate during the algorithm. Here we focus on the periodic states and states after QFT.

\subsection{Entanglement of Periodic states}

In Shor's algorithm, the periodic states $\ket{\Psi^n_{l,r}}$ of $n$-qubits, with shift $l$ and period $r$, that we can encounter after measuring the upper register, can be written as : 

\begin{equation}\label{periodic} \ket{\Psi^n_{l,r}} = \frac{1}{\sqrt{A}} \sum_{i=0}^{A-1}\ket{l+ir},  ~~\text{with } A = \ceil*{\frac{N-l}{r}}, ~ N=2^n ~ .\end{equation}

For example, for the periodic 3-qubit states, with shift $l=3$ and period $r=2$, we have $A = \ceil*{\frac{8-3}{2}} = 3$ basis elements, so :

\begin{equation} \ket{\Psi^3_{3,2}} = \frac{1}{\sqrt{3}} \big(\ket{3}+\ket{5} + \ket{7}\big) = \frac{1}{\sqrt{3}}  \big(\ket{011}+\ket{101} + \ket{111}\big)  ~ .\end{equation}

~

For every $(l,r) \in [\![0,15 ]\!]\times[\![1,15]\!]$, we write the periodic state and compute the corresponding Verstraete family. All the results are given in the Table \ref{periodic4res}.

\begin{table}[!h]
 \begin{center}
  \begin{tabular}{|l||c|c|c|c|c|c|c|c|c|c|c|c|c|c|c|c|}
  \hline
   \backslashbox{$r$}{$l$} & 0 & 1 & 2 & 3 & 4 & 5 & 6 & 7 &  8 & 9 & 10 & 11 & 12 & 13 & 14 & 15   \\
   \hline
   \hline
  1 & $Gr_1$ & $G_{00cc}$ & $Gr_4$ & $L_{00c_2}$ & $Gr_2$ & $L_{00c_2}$ & $Gr_4$ & $G_{00cc}$ &  $Gr_1$ & $Gr_4$ & $Gr_2$ & $Gr_4$ & $Gr_1$ & $Gr_2$ & $Gr_1$ & $Gr_1$ \\
   \hline
  2 & $Gr_1$ & $Gr_1$ & $Gr_4$& $Gr_4$ & $Gr_2$ & $Gr_2$ & $Gr_4$ & $Gr_4$ &  $Gr_1$ & $Gr_1$ & $Gr_2$ & $Gr_2$ & $Gr_1$ & $Gr_1$ & $Gr_1$ & $Gr_1$ \\
   \hline
  3 & $G_{aa(-2a)0}$ & $G_{abc0}$ & $G_{abc0}$ & $L_{a00_2}$ & $Gr_8$ & $Gr_8$ & $L_{aa0_2}$ & $Gr_6$ &  $Gr_3$ & $Gr_3$ & $Gr_4$ & $Gr_2$ & $Gr_2$ & $Gr_1$ & $Gr_1$ & $Gr_1$ \\
   \hline
  4 & $Gr_1$ & $Gr_1$ & $Gr_1$ & $Gr_1$ & $Gr_2$ & $Gr_2$ & $Gr_2$ & $Gr_2$ &  $Gr_1$ & $Gr_1$ & $Gr_1$ & $Gr_1$ & $Gr_1$ & $Gr_1$ & $Gr_1$ & $Gr_1$ \\
   \hline
  5 & $G_{a000}$ & $Gr_6$ &  $Gr_6$ &  $Gr_6$ &  $Gr_6$ & $L_{aa0_2}$ & $Gr_4$ & $Gr_4$ &  $Gr_2$ & $Gr_4$ & $Gr_2$ & $Gr_1$ & $Gr_1$ & $Gr_1$ & $Gr_1$ & $Gr_1$  \\  
  \hline
  6 & $Gr_3$ & $Gr_3$ & $Gr_3$ & $Gr_3$ & $Gr_4$ & $Gr_4$ & $Gr_2$ & $Gr_2$ & $Gr_2$ & $Gr_2$ & $Gr_1$ & $Gr_1$ & $Gr_1$ & $Gr_1$ & $Gr_1$ & $Gr_1$  \\  
  \hline
  7 & $Gr_6$ & $Gr_6$ & $Gr_4$ & $Gr_2$ & $G_{00cc}$ & $Gr_2$ & $Gr_4$ & $Gr_2$ &  $Gr_4$ & $Gr_1$ & $Gr_1$ & $Gr_1$ & $Gr_1$ & $Gr_1$ & $Gr_1$ & $Gr_1$  \\  
  \hline
  8 & $Gr_1$ & $Gr_1$ & $Gr_1$ & $Gr_1$ & $Gr_1$ & $Gr_1$ & $Gr_1$ & $Gr_1$ &  $Gr_1$ & $Gr_1$ & $Gr_1$ & $Gr_1$ & $Gr_1$ & $Gr_1$ & $Gr_1$ & $Gr_1$  \\  
  \hline
  9 & $Gr_2$ & $Gr_4$ & $Gr_2$ & $G_{00cc}$ & $Gr_2$ & $Gr_4$ & $Gr_2$ & $Gr_1$ &  $Gr_1$ & $Gr_1$ & $Gr_1$ & $Gr_1$ & $Gr_1$ & $Gr_1$ & $Gr_1$ & $Gr_1$  \\  
  \hline
  10 & $Gr_2$ & $Gr_2$ & $Gr_4$ & $Gr_4$ & $Gr_2$ & $Gr_2$ & $Gr_1$ & $Gr_1$ &  $Gr_1$ & $Gr_1$ & $Gr_1$ & $Gr_1$ & $Gr_1$ & $Gr_1$ & $Gr_1$ & $Gr_1$ \\  
  \hline
  11 & $Gr_4$ & $Gr_4$ & $G_{00cc}$ & $Gr_4$ & $Gr_4$ & $Gr_1$ & $Gr_1$ & $Gr_1$ &  $Gr_1$ & $Gr_1$ & $Gr_1$ & $Gr_1$ & $Gr_1$ & $Gr_1$ & $Gr_1$ & $Gr_1$  \\  
  \hline
  12 & $Gr_2$ & $Gr_2$ & $Gr_2$ & $Gr_2$ & $Gr_1$ & $Gr_1$ & $Gr_1$ & $Gr_1$ &  $Gr_1$ & $Gr_1$ & $Gr_1$ & $Gr_1$ & $Gr_1$ & $Gr_1$ & $Gr_1$ & $Gr_1$  \\  
  \hline
  13 & $Gr_4$ & $G_{00cc}$ & $Gr_4$ & $Gr_1$ & $Gr_1$ & $Gr_1$ & $Gr_1$ & $Gr_1$ & $Gr_1$ & $Gr_1$ & $Gr_1$ & $Gr_1$ & $Gr_1$ & $Gr_1$ & $Gr_1$ & $Gr_1$  \\  
  \hline
  14 & $Gr_4$ & $Gr_4$ & $Gr_1$ & $Gr_1$ & $Gr_1$ & $Gr_1$ & $Gr_1$ & $Gr_1$ & $Gr_1$ & $Gr_1$ & $Gr_1$ & $Gr_1$ & $Gr_1$ & $Gr_1$ & $Gr_1$ & $Gr_1$  \\  
  \hline
  15 & $G_{00cc}$ & $Gr_1$ & $Gr_1$ & $Gr_1$ & $Gr_1$ & $Gr_1$ & $Gr_1$ & $Gr_1$ & $Gr_1$ & $Gr_1$ & $Gr_1$ & $Gr_1$ & $Gr_1$ & $Gr_1$ & $Gr_1$ & $Gr_1$ \\  
  \hline
  \end{tabular}
\caption{Verstaete \textit{et al.} families of periodic states depending on their shift $l$ and period $r$.}\label{periodic4res}
 \end{center}

\end{table}

From the results presented in this table, we can extract the following properties for 4-qubit periodic states:

\begin{itemize}
\item When $l=0$ and the period take the values $r=1$, $r=2$, $r=4$ and $r=8$, the state is a separable state,

\item When $r=8$, the state always belong to the $Gr_1$ orbit (separable),

\item All the states on the anti-diagonal starting from $\{l=1,r=15\}$ to $\{l=15,r=1\}$, and all at the bottom-right (under) this anti-diagonal belong to the $Gr_1$ orbit (separable),

\item If the period take the values $r=1$, $r=2$, $r=4$ and $r=8$, and if the shift $0\leq l \leq r-1$ or $\frac{N}{2} \leq l \leq \frac{N}{2}+ r-1$, then the state is a separable state,

\item When $\{l=1,r=1\}$, $\{l=7,r=1\}$,  and when $\{l=0,r=15\}$, $\{l=1,r=13\}$, $\{l=2,r=11\}$, $\{l=3,r=9\}$ and $\{l=4,r=7\}$ (almost half of an anti-diagonal) the periodic states belong to the $G_{00cc}$ orbit, which is the orbit of $\ket{GHZ_4}$.

\end{itemize}

These observations lead us to try a generalisation of these properties for any $n$-qubit periodic state.

\begin{proposition}
Let $\ket{\Psi^n_{l,r}}$ be a $n$-qubit periodic state. If the shift $l=0$ and the period $r=2^s$ divide $N=2^n$ then $\ket{\Psi^n_{0,r}}$ is a separable state, and it can be written $\ket{\Psi^n_{0,r}}= \ket{+}^{\otimes (n-s)} \otimes  \ket{0}^{\otimes s}$.
\end{proposition}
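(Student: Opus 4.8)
The plan is to compute the state $\ket{\Psi^n_{0,r}}$ explicitly by substituting the hypotheses $l=0$ and $r=2^s$ into Equation (\ref{periodic}), and then to recognize the resulting superposition as a tensor product. First I would determine the number of terms: since $r=2^s$ divides $N=2^n$, the ceiling in the definition of $A$ becomes an exact division, giving $A = \ceil*{\frac{N}{r}} = 2^{n-s}$. Hence the state reads
\begin{equation}
\ket{\Psi^n_{0,r}} = \frac{1}{\sqrt{2^{n-s}}} \sum_{i=0}^{2^{n-s}-1} \ket{i\cdot 2^s} ~ .
\end{equation}

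The key observation, which I would make next, concerns the binary structure of the indices $i\cdot 2^s$. Writing each basis ket in the most-significant-bit-first convention used in the paper (as in the example $\ket{5}=\ket{101}$), multiplication by $2^s$ shifts the binary digits of $i$ upward by $s$ positions, so that the last $s$ qubits are forced to $\ket{0}$ while the first $n-s$ qubits carry exactly the $(n-s)$-bit binary expansion of $i$. Thus each summand factorizes as $\ket{i\cdot 2^s} = \ket{i}_{n-s}\otimes\ket{0}^{\otimes s}$, where $\ket{i}_{n-s}$ denotes the $(n-s)$-qubit computational basis state indexed by $i$, and crucially the fixed tensor factor $\ket{0}^{\otimes s}$ is common to every term.

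Factoring this constant tail out of the sum, I would rewrite
\begin{equation}
\ket{\Psi^n_{0,r}} = \left(\frac{1}{\sqrt{2^{n-s}}}\sum_{i=0}^{2^{n-s}-1}\ket{i}_{n-s}\right)\otimes\ket{0}^{\otimes s} ~ .
\end{equation}
The remaining step is to identify the bracketed factor with $\ket{+}^{\otimes(n-s)}$. This follows from the standard identity $\ket{+}^{\otimes m} = \frac{1}{\sqrt{2^m}}\bigotimes_{k=1}^{m}(\ket{0}+\ket{1}) = \frac{1}{\sqrt{2^m}}\sum_{i=0}^{2^m-1}\ket{i}_m$, applied with $m=n-s$: expanding the tensor product of $m$ copies of $\ket{0}+\ket{1}$ produces precisely the equal-weight sum over all $2^m$ computational basis states. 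This yields the claimed form $\ket{\Psi^n_{0,r}} = \ket{+}^{\otimes(n-s)}\otimes\ket{0}^{\otimes s}$, which is manifestly a product state and hence separable (belonging to the $Gr_1$ orbit).

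Since every step is a direct algebraic manipulation, there is no genuine analytic obstacle; the only point requiring care is bookkeeping the bit-ordering convention, so that the separation into the first $n-s$ qubits and the last $s$ qubits is stated consistently with the paper's labeling and matches the worked example. Once that convention is fixed, the factorization of each basis ket and the recognition of the equal superposition as $\ket{+}^{\otimes(n-s)}$ are both immediate.
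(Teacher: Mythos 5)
Your proof is correct, but it takes a genuinely different route from the paper's. You perform a single global factorization: since every index is $i\cdot 2^s$, each basis ket splits as $\ket{i}_{n-s}\otimes\ket{0}^{\otimes s}$, the common tail $\ket{0}^{\otimes s}$ factors out of the sum by linearity, and the remaining uniform superposition over all $2^{n-s}$ basis states is recognized as $\ket{+}^{\otimes(n-s)}$ in one step. The paper instead argues recursively: after disposing of the edge cases $r=N$ and $r=1$ separately, it splits the sum of $p$ terms into two halves, notes that the second half is the first half shifted by $N/2$, extracts the most significant qubit to get $\ket{\Psi^n_{0,r}}=\ket{+}\otimes\ket{\Psi^{n-1}_{0,r}}$, and iterates this peeling $q=n-s$ times until a single basis state $\ket{0}^{\otimes s}$ remains. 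Your approach buys uniformity and brevity --- the cases $r=1$ and $r=N$ need no special treatment, and there is no induction to set up --- and it is in fact closer in spirit to the technique the paper itself deploys later for the more general shifted case $l\in[\![0,2^s-1]\!]$ (factoring by the binary notation of $l$ over the $s$ least significant bits). What the paper's recursion buys is an explicit display of the self-similar structure of periodic states, namely that removing the leading qubit of $\ket{\Psi^n_{0,r}}$ yields again a periodic state with the same period on one fewer qubit, which is a reusable structural fact rather than just a verification. The only point you should make fully explicit is the one you flag yourself: the identification $\ket{i\cdot 2^s}=\ket{i}_{n-s}\otimes\ket{0}^{\otimes s}$ relies on the most-significant-bit-first labeling, which is indeed the paper's convention (e.g.\ $\ket{5}=\ket{101}$ in its $3$-qubit example), so your argument is consistent with it.
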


\begin{proof}
Let $\ket{\Psi^n_{l,r}}$ be a $n$-qubit periodic state, and let $N=2^n$. We suppose that the shift $l=0$, and that the period divide $N$. So there exist $p$ such that $r\times p = N$, and thus there exist $(s,q)\in \mathbb{N}^2$ such that $r=2^s$ and $p=2^q$. We recall that $\ket{\Psi^n_{0,r}}$ can be expressed as : 

\begin{equation}
\ket{\Psi^n_{0,r}} = \frac{1}{\sqrt{A}} \sum_{i=0}^{A-1}\ket{0+ir} ,  ~~\text{with } A = \ceil*{\frac{N-0}{r}} = p,
\end{equation}

so 

\begin{equation}
\ket{\Psi^n_{0,r}} = \frac{1}{\sqrt{p}} \big( \ket{0} + \ket{r} + \ket{2r} + \cdots + \ket{(p-1)r} \big) ~ .
\end{equation}

When the period $r$ is equal to $N$, we have $A = \ceil*{\frac{N-0}{r}} = 1$ basis state in the writing of $\ket{\Psi^n_{0,r}}$ , which is $\ket{0\dots 0}$ in the binary notation. So the state is separable and  we have $\ket{\Psi^n_{0,r}} = \ket{+}^{\otimes (n-n)} \otimes  \ket{0}^{\otimes n} = \ket{0}^{\otimes n}$. Besides, if the period $r$ is equal to one, then we have all the $A = \ceil*{\frac{N-0}{r}} = N$ basis states in the decomposition of the periodic state, so we get in fact the fully superposed state as expected : $\ket{\Psi^n_{0,r}} = \ket{+}^{\otimes (n-0)} \otimes  \ket{0}^{\otimes 0} = \ket{+}^{\otimes n}$.

~

Now we consider the case when $1<r<N$, i.e. when $2\leq r \leq \frac{N}{2}$. The periodic state is a sum of $p$ basis state, and $p$ is even, so we can always split the periodic state into two parts :

\begin{equation}
\ket{\Psi^n_{0,r}} = \frac{1}{\sqrt{p}} \big(  \underbrace{\ket{0} + \ket{r} + \ket{2r} + \cdots + \ket{\big(\frac{p}{2}-1\big)r}}_{\dfrac{p}{2} \text{ terms}}    + \underbrace{   \ket{\frac{p}{2}r}+ \cdots + \ket{(p-1)r}}_{\dfrac{p}{2} \text{ terms}} \big) ~ .
\end{equation}

We can easly see that we always have the $\ket{0}$ and $\ket{\dfrac{p}{2}r}=\ket{\dfrac{N}{2}}$ states in the parts of the periodic state, and these basis states can be written in binary notation $\ket{00\dots 0}$ and $\ket{10\dots 0}$. Then, depending on the period $r$ we will have other terms or not in each part, but we can always rewrite the periodic state as follow :

\begin{equation}
\ket{\Psi^n_{0,r}} = \frac{1}{\sqrt{p}} \big(  \underbrace{\ket{0} + \ket{r} + \ket{2r} + \cdots + \ket{\big(\frac{p}{2}-1\big)r}}_{\dfrac{p}{2} \text{ terms}}    + \underbrace{   \ket{\frac{N}{2}+0}+ \ket{\frac{N}{2}+r}+ \ket{\frac{N}{2}+2r}+\cdots + \ket{\frac{N}{2}+\big(\frac{p}{2}-1\big)r}}_{\dfrac{p}{2} \text{ terms}} \big) ~ .
\end{equation}

Therefore, the periodic state can be express as : 

\begin{equation}\label{periodictensor}
\ket{\Psi^n_{0,r}} = \ket{+} \otimes \frac{1}{\sqrt{2^{q-1}}} \Big( \ket{0} + \ket{r} + \ket{2r} + \cdots + \ket{\big(\frac{p}{2}-1\big)r} \Big) =  \ket{+} \otimes \ket{\Psi^{n-1}_{0,r}} ~ .
\end{equation}

Now if we consider the state $\ket{\Psi^{n-1}_{0,r}}$, we can repeat the same process until it remains only one state in the sum decomposition of the periodic state. This happen when $A=1$, so when the number of qubit of the periodic state considered (at the right side of the tensor product in the equation (\ref{periodictensor})) is equal to $s$. So this process is repeated $q=n-s$ times, and at the end, we retreive a seperable state: 

\begin{equation}
\ket{\Psi^n_{0,r}}= \ket{+}^{\otimes (n-s)} \otimes  \ket{0}^{\otimes s} ~ .
\end{equation}

\end{proof}

\begin{proposition}
Let $\ket{\Psi^n_{l,r}}$ be a $n$-qubit periodic state. If the period $r$ is equal to $\frac{N}{2}=2^{n-1}$ then, for all possible values of the shift $l$, $\ket{\Psi^n_{l,r}}$ is a separable state.

\end{proposition}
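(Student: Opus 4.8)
The plan is to specialize the defining formula $A = \ceil*{\frac{N-l}{r}}$ to the case $r = \frac{N}{2} = 2^{n-1}$ and then split into two cases according to whether the shift $l$ lies below or above the half-point $2^{n-1}$. First I would compute $A = \ceil*{\frac{2^n - l}{2^{n-1}}}$ and observe that the quantity $\frac{2^n - l}{2^{n-1}} = 2 - \frac{l}{2^{n-1}}$ lies in the interval $(1,2]$ exactly when $0 \le l \le 2^{n-1}-1$, so that $A=2$, and lies in $(0,1]$ when $2^{n-1} \le l \le 2^n-1$, so that $A=1$. These two ranges together exhaust all admissible values $l \in [\![0,N-1]\!]$.

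The case $l \ge 2^{n-1}$ is immediate: here $A=1$, so $\ket{\Psi^n_{l,r}}$ reduces to a single computational basis vector $\ket{l}$, which is a product $\ket{j_1}\otimes\cdots\otimes\ket{j_n}$ and hence trivially separable. The substantive case is $0 \le l \le 2^{n-1}-1$, where $A=2$ and
\begin{equation}
\ket{\Psi^n_{l,r}} = \frac{1}{\sqrt{2}}\big(\ket{l} + \ket{l + 2^{n-1}}\big).
\end{equation}

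The key step is a binary-arithmetic observation: since $l < 2^{n-1}$, the most significant bit of $l$ is $0$, so writing $l = (0\,j_2 j_3 \cdots j_n)$ in binary we have $l + 2^{n-1} = (1\,j_2 j_3 \cdots j_n)$, i.e. adding $2^{n-1}$ merely flips the leading bit from $0$ to $1$ while leaving the remaining $n-1$ bits unchanged. Both kets therefore share the same tail $\ket{j_2 j_3 \cdots j_n}$, and I can factor the leading qubit out:
\begin{equation}
\ket{\Psi^n_{l,r}} = \frac{1}{\sqrt{2}}\big(\ket{0} + \ket{1}\big) \otimes \ket{j_2 j_3 \cdots j_n} = \ket{+} \otimes \ket{j_2 j_3 \cdots j_n},
\end{equation}
which is a product state, hence separable.

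The only genuine obstacle is ensuring the carry-free addition claim is handled correctly, and this is where I would be most careful, since the factorization rests precisely on the hypothesis $l < 2^{n-1}$: if $l$ had a nonzero leading bit the addition $l+2^{n-1}$ would produce a carry and the two kets would no longer agree on their tail, so the leading qubit could not be factored out. Everything else — the evaluation of the ceiling and the identification of the boundary between the two cases — is a routine verification.
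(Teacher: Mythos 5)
Your proposal is correct and follows essentially the same route as the paper: both split into the cases $l \geq \frac{N}{2}$ (where $A=1$ and the state is a single basis vector) and $l < \frac{N}{2}$ (where $A=2$ and the state factors as $\ket{+}$ tensored with the common $(n-1)$-bit tail). Your explicit justification of the carry-free addition — that adding $2^{n-1}$ to $l < 2^{n-1}$ only flips the leading bit — is a detail the paper leaves implicit, but it is the same argument.
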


\begin{proof}
Let $\ket{\Psi^n_{l,r}}$ be a $n$-qubit periodic state, and let $N=2^n$. We suppose that the period $r$ is equal to $\frac{N}{2}$. The periodic state will contain only $A = \ceil*{\dfrac{N-l}{\frac{N}{2}}}$ basis states. If the shift satisfies $0 \leq l <\frac{N}{2}$ then we will have $A=2$ elements, otherwise, if  $\frac{N}{2} \leq l < N$ then we will only have $A=1$.

~

In the second case, the only element in the periodic state will be $\ket{l}$ and then it is clear that the periodic state is a separable state. In the first case, we can rewrite the periodic state as:

\begin{equation}
\ket{\Psi^n_{l,\frac{N}{2}}}= \frac{1}{\sqrt{2}} \Big( \ket{l} + \ket{l+\frac{N}{2}} \Big) = \ket{+}\otimes \ket{l},
\end{equation}

which is also a separable state.

\end{proof}

\begin{proposition} \label{propAntiDiag2}
Let $\ket{\Psi^n_{l,r}}$ be a $n$-qubit periodic state with shift $l$ and period $r$, and let $N=2^n$. If $l+r\geq N$ then  $\ket{\Psi^n_{l,r}}$ is separable. 
\end{proposition}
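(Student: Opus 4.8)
The plan is to show that the hypothesis $l+r\geq N$ forces the index count $A$ to collapse to exactly one, so that the periodic state reduces to a single computational basis state, which is trivially separable.

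First I would translate the hypothesis into a statement about the ceiling defining $A$ in Equation (\ref{periodic}). The inequality $l+r\geq N$ is equivalent to $N-l\leq r$. Since a valid shift satisfies $0\leq l\leq N-1$, we also have $N-l\geq 1>0$, so that $0<\frac{N-l}{r}\leq 1$. Taking the ceiling then yields $A=\ceil*{\frac{N-l}{r}}=1$, where the boundary case $l+r=N$ gives $\frac{N-l}{r}=1$ and the strict case $l+r>N$ gives $0<\frac{N-l}{r}<1$, both rounding up to $1$.

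Next, substituting $A=1$ into the defining expression (\ref{periodic}) makes the sum range over $i=0$ only, leaving $\ket{\Psi^n_{l,r}}=\ket{l}$. Finally, I would invoke the binary expansion $l=l_1 l_2\cdots l_n$ with each $l_j\in\{0,1\}$ to write $\ket{l}=\ket{l_1}\otimes\ket{l_2}\otimes\cdots\otimes\ket{l_n}$, a product of one-qubit states; hence $\ket{\Psi^n_{l,r}}$ is separable and lies in the orbit $Gr_1$.

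The only genuinely computational step is establishing $A=1$ from the ceiling, and once that is in hand the separability is immediate since a single basis vector is always a product state. Consequently I do not anticipate any real obstacle here, beyond taking care that the non-strict boundary $l+r=N$ is included alongside the strict inequality, which is exactly why the statement covers both the anti-diagonal and everything below it in Table \ref{periodic4res}.
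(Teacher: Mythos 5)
Your proposal is correct and follows essentially the same route as the paper's own proof: both reduce the hypothesis $l+r\geq N$ to the ceiling computation $A=\ceil*{\frac{N-l}{r}}=1$, so that the state collapses to the single basis vector $\ket{l}$, which is separable. Your version merely adds the (worthwhile) details that $0<\frac{N-l}{r}\leq 1$ guarantees the ceiling is exactly $1$, and that a computational basis state factorizes qubit by qubit.
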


\begin{proof}
Because $l+r \geq N \iff  1 \geq \frac{N-l}{r}$, one knows that the number of basis state to express $\ket{\Psi^n _{l,r}}$ will be equal to $A= \ceil*{\frac{N-l}{r}} = 1$, and thus the state is separable. 
\end{proof}

\begin{proposition}
Let $\ket{\Psi^n_{l,r}}$ be a $n$-qubit periodic state. If the period $r=2^s$ divide $N=2^n$ then for all values of the shift  respectively $l \in [\![  0, 2^s -1  ]\!]$ or $l \in  [\![  \frac{N}{2}, \frac{N}{2} + 2^s -1  ]\!]$, the state $\ket{\Psi^n_{l,2^s}}$ is a separable state, and it can be written  respectively  $\ket{\Psi^n_{l,2^s}}= \ket{+}^{\otimes (n-s)} \otimes  \ket{l}^{[s]}$ or $\ket{\Psi^n_{l,2^s}}= \ket{1}\otimes \ket{+}^{\otimes (n-s-1)} \otimes  \ket{l}^{[s]}$ (with $\ket{l}^{[s]}$ the state $\ket{l}$ written in binary notation with $s$ bits).
\end{proposition}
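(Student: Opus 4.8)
The plan is to handle the two shift ranges separately, though both reduce to one elementary fact: since the period is a power of two, $r=2^s$, adding any multiple of $r$ to $l$ leaves the $s$ least significant bits of the resulting index unchanged and only alters the higher-order bits. The whole argument is then a bit-level factorisation of the tensor product, with no invariant-theoretic input needed.

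First I would determine the number of terms $A=\lceil (N-l)/r\rceil$ in each regime, as this is where the hypotheses on the range of $l$ enter. Writing $p=N/r=2^{n-s}$, for $l\in[\![0,2^s-1]\!]$ one has $(N-l)/r = p - l/2^s \in (p-1,p]$, so $A=p=2^{n-s}$; and for $l\in[\![N/2,N/2+2^s-1]\!]$ one finds $(N-l)/r\in(p/2-1,p/2]$, so $A=p/2=2^{n-1-s}$. This counting must be carried out carefully, since it fixes exactly which basis vectors occur in each case.

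For the first range, each index is $l+i\,2^s$ with $0\le i\le 2^{n-s}-1$ and $0\le l\le 2^s-1$; since $l<2^s$ there is no carry into the upper block, so in binary $\ket{l+i\,2^s}=\ket{i}^{[n-s]}\otimes\ket{l}^{[s]}$, where $\ket{i}^{[n-s]}$ sits on the top $n-s$ qubits. As $i$ sweeps all of $[\![0,2^{n-s}-1]\!]$, the factor $\sum_i\ket{i}^{[n-s]}$ is exactly the (unnormalised) uniform superposition on those qubits, and pulling out the constant lower register gives $\ket{\Psi^n_{l,2^s}}=\ket{+}^{\otimes(n-s)}\otimes\ket{l}^{[s]}$. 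For the second range I would set $l=N/2+l'$ with $0\le l'\le 2^s-1$; since $s\le n-1$ makes $2^s$ divide $N/2=2^{n-1}$, the bottom $s$ bits of every index equal those of $l'=l\bmod 2^s$, which is the intended meaning of $\ket{l}^{[s]}$ here. The contribution $N/2$ forces the top bit to $1$, and because $i\,2^s\le 2^{n-1}-2^s$ for $0\le i\le 2^{n-1-s}-1$ no overflow past bit $n-1$ occurs, so $\ket{l+i\,2^s}=\ket{1}\otimes\ket{i}^{[n-1-s]}\otimes\ket{l}^{[s]}$. Summing over the $2^{n-1-s}$ values of $i$ produces a uniform superposition on the middle register, yielding $\ket{\Psi^n_{l,2^s}}=\ket{1}\otimes\ket{+}^{\otimes(n-1-s)}\otimes\ket{l}^{[s]}$. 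Both forms are manifestly product states, hence separable.

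I expect the main obstacle to be purely the bit-level bookkeeping: verifying that no carry propagates between the three qubit blocks, that the index $i$ sweeps its block exhaustively, and that the boundary values of the shift give precisely $A=2^{n-s}$ or $A=2^{n-1-s}$. Once these are confirmed, the tensor factorisation and the identification of the uniform superposition with $\ket{+}^{\otimes(n-s)}$ (respectively $\ket{+}^{\otimes(n-1-s)}$) are immediate.
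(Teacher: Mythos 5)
Your proposal is correct and follows essentially the same route as the paper's proof: since $r=2^s$, the low $s$ bits of every index are frozen at $l \bmod 2^s$ and factor out, while the high bits sweep a full dyadic range, producing the $\ket{+}^{\otimes(\cdot)}$ factor. The only cosmetic difference is in the second shift range, where you factor directly into three blocks (top bit, middle register, bottom $s$ bits) while the paper peels off the leading $\ket{1}$ and reduces to the first case as an $(n-1)$-qubit periodic state; your explicit verification of the term count $A$ in each regime is slightly more careful than the paper's.
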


\begin{proof}
Let $\ket{\Psi^n_{l,r}}$ be a $n$-qubit periodic state, and let $N=2^n$. We suppose that the period divide $N$. So there exists $p$ such that $r\times p = N$, and thus there exists $(s,q)\in \mathbb{N}^2$ such that $r=2^s$ and $p=2^q$, and so $r$ is a power of 2. 

~

Therefore, the binary writting of $r$ is composed by a unique digit '1', and $n-1$ digits '0'. We know that $r=2^s$ that this '1' is at the $s$-th position. Then all multiples of $r$, which are sum of the same binary number $r$, can be written with '0' and '1' at the left of the $s$-th bit, and with only '0' bits at the right of the $s$-th bit (if we assume that the most significant bit is at the left). 

~

We focus first on the case $l \in [\![  0, 2^s -1  ]\!]$. By assuming that the shift satisfies $0 \leq l < r=2^s$, one can determine that the number of terms in the writing of $\ket{\Psi^n_{l,r}}$ is always $A= \ceil*{\frac{N-l}{r}} =p= 2^q$, and the periodic state can be written: 

\begin{equation}
\ket{\Psi^n_{l,r}} = \frac{1}{\sqrt{p}} \big( \ket{l} + \ket{l+r} + \ket{l+2r} + \cdots + \ket{l+(p-1)r} \big) ~ .
\end{equation}

By knowing that $l<r$, and that $r$ and all its multiples contain only '0' bits at the right of $s$-th bit '1' in the binary notation of $r$, we know that we can factorize, by the binary notation of $l$ in $s$ bits, the sum of basis states defining $\ket{\Psi^n_{l,r}}$. Then it will remove the $s$ less significant bits in the binary writing of $r$ and its multiples, and this lead us to the following expression:

\begin{equation}
\ket{\Psi^n_{l,r}} = \frac{1}{\sqrt{p}} \big( \ket{0} + \ket{1} + \ket{2} + \cdots + \ket{(p-1)} \big)^{[n-s]}\otimes \ket{l}^{[s]} ~ .
\end{equation}

The left factor of this tensor product is in fact the $\ket{+}^{\otimes (n-s)}$ state, and we can conclude that if $r$ divide $N$, and $l \in [\![  0, 2^s -1  ]\!]$, then we retrieve a separable state of the form: 

\begin{equation}
\ket{\Psi^n_{l,r}} = \ket{+}^{\otimes (n-s)} \otimes \ket{l}^{[s]} ~ .
\end{equation}

We focus now on the second case, i.e. when $l \in  [\![  \frac{N}{2}, \frac{N}{2} + 2^s -1  ]\!]$, which is similar the first one, excepted that we add $\frac{N}{2}$ to the shift. However, we know that $\frac{N}{2}$ is also a power of 2, and in our case, it corresponds to the state $\ket{10\dots0}$ in binary notation, when we work with $n$ bits. So this '1' bit will be present in every basis state composing the periodic state and then we can define $l'=l-\frac{N}{2}$:

\begin{equation}
\ket{\Psi^n_{l,r}} = \frac{1}{\sqrt{\dfrac{p}{2}}} \Big( \ket{l} + \ket{l+r} + \ket{l+2r} + \cdots + \ket{l+\big(\frac{p}{2}-1\big)r} \Big) ,
\end{equation}
\begin{equation}
\ket{\Psi^n_{l,r}} =  \frac{1}{\sqrt{\dfrac{p}{2}}} ~ \ket{1} \otimes \Big( \ket{l'} + \ket{l'+r} + \ket{l'+2r} + \cdots + \ket{l'+\big(\frac{p}{2}-1\big)r} \Big) ~ .
\end{equation}

The right factor of the tensor product, with the normalisation factor $\frac{1}{\sqrt{\frac{p}{2}}}$, is in fact a periodic state with shift $l' \in [\![  0, 2^s -1  ]\!]$, and a period $r=2^s$ for $(n-1)$-qubits. So by using results of the first case, we conclude that the state is separable, and we can express the whole state $\ket{\Psi^n_{l,r}}$ as:

\begin{equation}
\ket{\Psi^n_{l,r}} = \ket{1} \otimes \ket{+}^{\otimes (n-s-1)} \otimes \ket{l-\frac{N}{2}}^{[s]} ~ .
\end{equation}

\end{proof}

\begin{proposition}
Let $\ket{\Psi^n_{l,r}}$ be a $n$-qubit periodic state with shift $l$ and period $r$, and let $N=2^n$. Then there is at least $\floor*{\dfrac{N-2}{3}}+3$ pairs $(l,r)$  that define periodic states $\text{SLOCC}$ equivalents to $\ket{GHZ_n}$, and we can separate the following three cases: 

\begin{itemize}
\item the case $l=1$ and $r=1$,

\item the case $l=\frac{N}{2}-1$ and $r=1$,

\item and the $\floor*{\dfrac{N-2}{3}}+1$ other cases in the anti-diagonal defined by the relation $2l+r=N-1$.
\end{itemize}

\end{proposition}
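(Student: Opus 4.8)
The plan is to reduce the statement to the standard geometric characterization of the $\ket{GHZ_n}$ SLOCC class, the one underlying the remark on the secant variety $\sigma(X)=\PP(\overline{\SLOCC.\ket{GHZ}_4})$: a pure state lies in the orbit of $\ket{GHZ_n}$ exactly when it is a \emph{genuine} rank-two tensor, i.e. when it admits a two-term decomposition
\[
\ket{\Psi}=\ket{u_1}\otimes\cdots\otimes\ket{u_n}+\ket{w_1}\otimes\cdots\otimes\ket{w_n}
\]
in which $\ket{u_i}$ and $\ket{w_i}$ are linearly independent for every $i$. Indeed, the local invertible maps sending $\ket{u_i}\mapsto\ket{0}$ and $\ket{w_i}\mapsto\ket{1}$ on each factor carry such a $\ket{\Psi}$ to $\ket{0}^{\otimes n}+\ket{1}^{\otimes n}$. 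So for each of the three announced families I would exhibit a two-term product decomposition and verify the independence condition factor by factor.

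For the anti-diagonal $2l+r=N-1$ I would first pin down the number of terms $A=\ceil*{\frac{N-l}{r}}$. Substituting $r=N-1-2l$ gives $\frac{N-l}{r}=1+\frac{l+1}{N-1-2l}$, and a short estimate of the ceiling shows that $A=2$ holds exactly when $l\leq\frac{N-2}{3}$, that is for the $\floor*{\frac{N-2}{3}}+1$ integers $l\in[\![0,\floor*{\frac{N-2}{3}}]\!]$. For these values the state is simply $\frac{1}{\sqrt{2}}\big(\ket{l}+\ket{l+r}\big)$. The key observation is that $2l+r=N-1$ is equivalent to $l+r=N-1-l$, and $N-1-l$ is precisely the bitwise complement of $l$ on $n$ bits; hence $\ket{l}$ and $\ket{l+r}$ differ in every one of the $n$ bit positions. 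Reading the two basis states factorwise, each qubit carries $\ket{0}$ in one term and $\ket{1}$ in the other, so the independence condition holds and the state is GHZ-equivalent.

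The two remaining pairs both have $r=1$ and $A\neq 2$, so they require explicit factorizations. For $(l,r)=(1,1)$ the state is proportional to $\sum_{k=1}^{N-1}\ket{k}=\sqrt{N}\,\ket{+}^{\otimes n}-\ket{0}^{\otimes n}$, a sum of two product vectors whose factors $\ket{+}$ and $\ket{0}$ are independent. For $(l,r)=(\frac{N}{2}-1,1)$ I would use $\frac{N}{2}-1=2^{n-1}-1$, so that $\ket{\frac{N}{2}-1}=\ket{0}\otimes\ket{1}^{\otimes(n-1)}$, while $\sum_{k=N/2}^{N-1}\ket{k}=\ket{1}\otimes\big(\sqrt{N/2}\,\ket{+}^{\otimes(n-1)}\big)$; the state is then proportional to $\ket{0}\otimes\ket{1}^{\otimes(n-1)}+\sqrt{N/2}\,\ket{1}\otimes\ket{+}^{\otimes(n-1)}$, again two product vectors, independent in the first qubit ($\ket{0}$ versus $\ket{1}$) and in the remaining ones ($\ket{1}$ versus $\ket{+}$).

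Finally I would check that, for $n\geq 3$, the three families are pairwise disjoint: the anti-diagonal $A=2$ cases all have $r=N-1-2l\geq 3$ (since each such $l<\frac{N}{2}-1$), whereas the two exceptional pairs have $r=1$ and satisfy $1\neq\frac{N}{2}-1$. Summing the counts gives $1+1+\big(\floor*{\frac{N-2}{3}}+1\big)=\floor*{\frac{N-2}{3}}+3$, in agreement with the ``at least'' in the statement. I expect the only delicate point to be the ceiling estimate locating $A=2$ in the range $l\leq\frac{N-2}{3}$ (including the boundary when $3\mid N-2$), together with the bookkeeping that the pair $(\frac{N}{2}-1,1)$, although it also satisfies $2l+r=N-1$, falls outside that range and must be counted separately; the independence checks themselves are routine once the product decompositions are written down.
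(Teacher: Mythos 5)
Your proposal is correct and follows essentially the same route as the paper's proof: the same two explicit rank-two decompositions for $(l,r)=(1,1)$ and $(l,r)=(\tfrac{N}{2}-1,1)$, the same bitwise-complement argument showing that the $A=2$ anti-diagonal states $\frac{1}{\sqrt{2}}\bigl(\ket{l}+\ket{N-1-l}\bigr)$ are GHZ-equivalent, and the same ceiling estimate $1<\frac{N-l}{N-1-2l}\leq 2 \iff l\leq\frac{N-2}{3}$ for the count. Your two additions --- stating the linear-independence criterion for membership in the GHZ orbit explicitly, and checking that the three families are pairwise disjoint so the counts add up --- are welcome refinements of details the paper leaves implicit, not a different method.
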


\begin{proof}
Let $\ket{\Psi^n_{l,r}}$ be a $n$-qubit periodic state, and let $N=2^n$.

~

In the first case, if the shift is $l=1$, and the period $r=1$, we obtain the periodic state:

\begin{equation}
\ket{\Psi^n_{1,1}}= \frac{1}{\sqrt{N-1}} \sum_{x=1}^{N-1} \ket{x} = \frac{\sqrt{N}}{\sqrt{N-1}}\ket{+}^{\otimes n} - \frac{1}{\sqrt{N-1}}\ket{0}^{\otimes n} ~ .
\end{equation}

This state is a (generic) rank 2 tensor, which belongs to the smooth points of  the secant variety, and in particular is $\text{SLOCC}$ equivalent to $\ket{GHZ_n}$. 

~

In the second case, if the shift is $l=\frac{N}{2}-1$, and the period $r=1$, by remarking that $\ket{\frac{N}{2}} = \ket{10\dots 0}$ and then $\ket{\frac{N}{2}-1} = \ket{01\dots 1}$  we obtain the periodic state

\begin{equation}
\ket{\Psi^n_{\frac{N}{2}-1,1}}= \frac{1}{\sqrt{\frac{N}{2}+1}} \sum_{x=\frac{N}{2}-1}^{N-1} \ket{x} = \frac{1}{\sqrt{\frac{N}{2}+1}}    \Big( \ket{011\dots 1} +  \ket{100\dots 0} + \ket{10\dots 01} + \cdots + \ket{111\dots 1} \Big),
\end{equation}

that can be also expressed as

\begin{equation}
\ket{\Psi^n_{\frac{N}{2}-1,1}}= \frac{1}{\sqrt{\frac{N}{2}+1}}    \Big( \ket{011\dots 1} + \sqrt{\frac{N}{2}} \ket{1}\otimes \ket{+}^{\otimes n-1} \Big) ,
\end{equation}

and, this state is a also $\text{SLOCC}$ equivalent to $\ket{GHZ_n}$.

~

For the third case, we should focus on the particular anti-diagonal starting from the "point" $\{l=0,r=N-1\}$ and then moving to the top-right, by adding one to $l$ and removing two to $r$. This anti-diagonal is then defined by the following equations 

\begin{equation}
\Syst{l=k}{r=N-1-2k} ~~  \text{    with } k \in [\![  0, \frac{N}{2}-1  ]\!] ~ .
\end{equation}

We proved that the top-right point of the anti-diagonal $\{l=\frac{N}{2}-1,r=1\}$ is $\text{SLOCC}$ equivalent to $\ket{GHZ_n}$, and we can easily prove that the bottom-left point of the anti-diagonal $\{l=0,r=N-1\}$ is in fact the definition of the generalized $\ket{GHZ_n}= \frac{1}{\sqrt{2}} \big( \ket{0}^{\otimes n} + \ket{1}^{\otimes n} \big)$ state. The interesting part is to determine what is happening in the middle of the anti-diagonal. 

~

In fact, we can deduce from the anti-diagonal equations that $2l+r=N-1$. Besides, we know that the bottom-left point $\{l=0,r=N-1\}$ has only 2 basis states in its writing. Thus, we can remark that if a periodic state has only 2 basis states in its writing and if it satisfies the anti-diagonal condition $2l+r=N-1$, then state is $\text{SLOCC}$ equivalent to the $\ket{GHZ_n}$ state. In fact, this periodic state, let us call it $\ket{\Psi}$, that have only 2 basis states will be expressed as 

\begin{equation}
\ket{\Psi}= \frac{1}{\sqrt{2}}    \big( \ket{l} + \ket{l+r} \big)  ~ .
\end{equation}

But we also know, that the shift $l$ and the period $r$ of this state satisfy the condition $2l+r= l + (l+r) =N-1$. So if we work with binary notations, and by knowing that $N-1=2^n -1$ is always written with only '1' digits in its binary notation, we can conclude that the two binary numbers $l$ and $l+r$ are complementary with respect to $N=2^n$. Consequently the state $\ket{\Psi}$ is by definition an equivalent state of $\ket{GHZ_n}$. 

~

How many states are satisfying these conditions for periodic states ? In order to determine that,  we focus on the condition regarding the number of basis states in the periodic states. So we need that $A = \ceil*{\frac{N-l}{r}} = 2$, so it is equivalent to the inequation

\begin{equation}
1 <\frac{N-l}{r} \leq 2 ~ .
\end{equation}

If we substitute now $l$ and $r$ by the equations defining the anti-diagonal, we have 

\begin{equation}
1 <\frac{N-k}{N-1-2k} \leq 2 ~ .
\end{equation}

If we solve this inequation in $k$ and forget about the ceiling operation, we retrieve the result

\begin{equation}
-1 <k \leq \frac{N-2}{3} ~ .
\end{equation}

If we come back to integer numbers, then we proved that if $k \in [\![  0, \floor*{\dfrac{N-2}{3}} ]\!]$ the periodic state defined by $l=k$ and $r=N-1-2k$ has only two basis states in its writing.

~

So, the number of states equivalent to $\ket{GHZ_n}$ on the anti-diagonal is equal to $\floor*{\frac{N-2}{3}}+1$, without counting the $\{l=\frac{N}{2}-1,r=1\}$ case. In other words, from $\{l=0,r=N-1\}$ to $\{l=\floor*{\frac{N-2}{3}},r=N-1-2\floor*{\frac{N-2}{3}}\}$, all the states on the anti-diagonal are $\text{SLOCC}$ equivalent to $\ket{GHZ_n}$. 

\end{proof}

~

In this subsection, we were able to study entanglement of periodic states from a qualititve point of view. These states, which are involved in Shor's algorithm after the measurement step, show some entanglement properties that depend on the shift and the period of the considered periodic state. Indeed, we were able to point out some "rules" permitting to simplify the identification of the entanglement type of a given periodic state defined by his shift $l$ and period $r$. In the next subsection, we try to continue this work by considering periodic states after the application of the Quantum Fourier Transform, which is in fact the next step of Shor's algorithm.

\subsection{Entanglement of Periodic states after QFT}

In this subsection we investigate the entanglement of periodic states after the application of the Quantum Fourier Transform for 4-qubit systems, and we try to generalize some results for $n$-qubits periodic states. 

~

\begin{table}[!h]
 \begin{center}
  \begin{tabular}{|l||c|c|c|c|c|c|c|c|c|c|c|c|c|c|c|c|}
  \hline
   \backslashbox{$r$}{$l$} & 0 & 1 & 2 & 3 & 4 & 5 & 6 & 7 &  8 & 9 & 10 & 11 & 12 & 13 & 14 & 15   \\
   \hline
   \hline
  1 & $Gr_1$ & $G_{00cc}$ & $G_{abc0}$ & $G_{abcd}$ & $G_{abcd}$ & $G_{abcd}$ & $G_{abcd}$ & $G_{abcd}$ & $L_{a_20_{3\oplus \overline{1}}}$ &  $G_{abcd}$  & $G_{abcd}$  & $G_{abcd}$  & $G_{abcd}$  & $L_{a_20_{3\oplus \overline{1}}}$ & $L_{00c_2}$  & $Gr_1$ \\
   \hline
  2 & $Gr_1$ & $Gr_1$ & $Gr_4$& $Gr_4$ & $Gr_4$ & $Gr_4$ & $Gr_4$ & $Gr_4$ &  $Gr_4$ & $Gr_4$ & $Gr_4$ & $Gr_4$ & $Gr_4$ & $Gr_4$ & $Gr_1$ & $Gr_1$ \\
   \hline
  3 &  $G_{abcd}$ & $G_{abcd}$ & $G_{abcd}$ & $G_{abcd}$ & $G_{abcd}$ &  $G_{abcd}$ & $G_{abcd}$ & $L_{a_20_{3\oplus \overline{1}}}$ & $L_{a_20_{3\oplus \overline{1}}}$ & $L_{a_20_{3\oplus \overline{1}}}$ & $L_{00c_2}$ & $L_{00c_2}$ & $L_{00c_2}$ & $Gr_1$ & $Gr_1$ & $Gr_1$ \\
   \hline
  4 & $Gr_1$ & $Gr_1$ & $Gr_1$ & $Gr_1$ & $Gr_2$ & $Gr_2$ & $Gr_2$ & $Gr_2$ &  $Gr_2$ & $Gr_2$ & $Gr_2$ & $Gr_2$ & $Gr_1$ & $Gr_1$ & $Gr_1$ & $Gr_1$ \\
   \hline
  5 & $G_{abcd}$ & $L_{a_20_{3\oplus \overline{1}}}$ & $L_{a_20_{3\oplus \overline{1}}}$ & $L_{a_20_{3\oplus \overline{1}}}$ & $L_{a_20_{3\oplus \overline{1}}}$ & $L_{a_20_{3\oplus \overline{1}}}$ & $L_{00c_2}$ &  $L_{00c_2}$ & $L_{00c_2}$ & $L_{00c_2}$ & $L_{00c_2}$ & $Gr_1$ & $Gr_1$ & $Gr_1$ & $Gr_1$ & $Gr_1$ \\
  \hline
  6 & $Gr_4$ & $Gr_4$ & $Gr_4$ & $Gr_4$ & $Gr_4$ & $Gr_4$ & $Gr_4$ & $Gr_4$ & $Gr_4$ & $Gr_4$ & $Gr_1$ & $Gr_1$ & $Gr_1$ & $Gr_1$ & $Gr_1$ & $Gr_1$  \\  
  \hline
  7 & $L_{a_20_{3\oplus \overline{1}}}$ & $L_{a_20_{3\oplus \overline{1}}}$ & $L_{00c_2}$ & $L_{00c_2}$ & $L_{00c_2}$ & $L_{00c_2}$ & $L_{00c_2}$ & $L_{00c_2}$ &  $L_{00c_2}$ & $Gr_1$ & $Gr_1$ & $Gr_1$ & $Gr_1$ & $Gr_1$ & $Gr_1$ & $Gr_1$  \\  
  \hline
  8 & $Gr_1$ & $Gr_1$ & $Gr_1$ & $Gr_1$ & $Gr_1$ & $Gr_1$ & $Gr_1$ & $Gr_1$ &  $Gr_1$ & $Gr_1$ & $Gr_1$ & $Gr_1$ & $Gr_1$ & $Gr_1$ & $Gr_1$ & $Gr_1$  \\  
  \hline
  9 & $L_{00c_2}$ & $L_{00c_2}$ & $L_{00c_2}$ & $L_{00c_2}$ & $L_{00c_2}$ & $L_{00c_2}$ & $L_{00c_2}$ & $Gr_1$ &  $Gr_1$ & $Gr_1$ & $Gr_1$ & $Gr_1$ & $Gr_1$ & $Gr_1$ & $Gr_1$ & $Gr_1$  \\  
  \hline
  10 & $Gr_4$ & $Gr_4$ & $Gr_4$ & $Gr_4$ & $Gr_4$ & $Gr_4$ & $Gr_1$ & $Gr_1$ &  $Gr_1$ & $Gr_1$ & $Gr_1$ & $Gr_1$ & $Gr_1$ & $Gr_1$ & $Gr_1$ & $Gr_1$ \\  
  \hline
  11 & $L_{00c_2}$ & $L_{00c_2}$ & $L_{00c_2}$ & $L_{00c_2}$ & $L_{00c_2}$ & $Gr_1$ & $Gr_1$ & $Gr_1$ &  $Gr_1$ & $Gr_1$ & $Gr_1$ & $Gr_1$ & $Gr_1$ & $Gr_1$ & $Gr_1$ & $Gr_1$  \\  
  \hline
  12 & $Gr_2$ & $Gr_2$ & $Gr_2$ & $Gr_2$ & $Gr_1$ & $Gr_1$ & $Gr_1$ & $Gr_1$ &  $Gr_1$ & $Gr_1$ & $Gr_1$ & $Gr_1$ & $Gr_1$ & $Gr_1$ & $Gr_1$ & $Gr_1$  \\  
  \hline
  13 & $L_{00c_2}$ & $L_{00c_2}$ & $L_{00c_2}$ & $Gr_1$ & $Gr_1$ & $Gr_1$ & $Gr_1$ & $Gr_1$ & $Gr_1$ & $Gr_1$ & $Gr_1$ & $Gr_1$ & $Gr_1$ & $Gr_1$ & $Gr_1$ & $Gr_1$  \\  
  \hline
  14 & $Gr_4$ & $Gr_4$ & $Gr_1$ & $Gr_1$ & $Gr_1$ & $Gr_1$ & $Gr_1$ & $Gr_1$ & $Gr_1$ & $Gr_1$ & $Gr_1$ & $Gr_1$ & $Gr_1$ & $Gr_1$ & $Gr_1$ & $Gr_1$  \\  
  \hline
  15 & $L_{00c_2}$ & $Gr_1$ & $Gr_1$ & $Gr_1$ & $Gr_1$ & $Gr_1$ & $Gr_1$ & $Gr_1$ & $Gr_1$ & $Gr_1$ & $Gr_1$ & $Gr_1$ & $Gr_1$ & $Gr_1$ & $Gr_1$ & $Gr_1$ \\  
  \hline
  \end{tabular}
\caption{Verstaete \textit{et al.} families of the resulting states after applying the QFT on periodic states depending on their shift $l$ and period $r$. }\label{periodicAfterQFT4res}
 \end{center}

\end{table}

From the results presented in this Table \ref{periodicAfterQFT4res}, as we did for the case before the QFT, we can extract the following properties for 4-qubit periodic states after the Quantum Fourier Transform:

\begin{itemize}
\item When $r=8$, the state always belong to the $Gr_1$ orbit (separable),

\item For all the states that have the shift $l$ and period $r$ satisfying $l+r \geq N$ (i.e. we are on or under the anti-diagonal from $\{l=1,r=15\}$ and $\{l=15,r=1\}$), they belong to the $Gr_1$ orbit of the nullcone, and thus are separable states,

\item If the period takes the values $r=1$, $r=2$, $r=4$ and $r=8$, and if the shift $0\leq l \leq r-1$, then the state is a separable state,

\item If the period take the values $r=1$, $r=2$, and $r=4$, and if the shift $\frac{N}{2} \leq l \leq \frac{N}{2}+ r-1$, then the state is not a separable state,

\item If the period is equal to $r=2$, then if the shift $2 \leq l \leq N-3$, then the state belongs to the $Gr_4$ orbit, and thus is not a separable state.

\end{itemize}

~

We were not able, as it was the case for periodic states before the QFT, to generalize all these observations to the general $n$-qubit case. Most of these behaviors were observed for both 3-qubit (see Appendix \ref{appendixC}) and 4-qubit case, but such a table cannot be constructed for the case of 5-qubits as there is no known {SLOCC} classification.

~

We were able, however, to propose one result with respect to periodic states after QFT that are separables, and we use for that a basic property of the Quantum Fourier Transform:

\begin{proposition} \label{propAntiDiagAfter}
Let $\ket{\Psi^n_{l,r}}$ be a $n$-qubit periodic state with shift $l$ and period $r$, and let $N=2^n$. Then, after the application of the QFT, all the states on and under the anti-diagonal, defined from the cell $\{l=1,r=N-1\}$ to $\{l=N-1,r=1\}$, are separable states. 
\end{proposition}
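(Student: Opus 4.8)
The plan is to reduce the statement to the already-established Proposition~\ref{propAntiDiag2} together with the standard product decomposition of the Quantum Fourier Transform acting on a computational basis vector. The region ``on and under the anti-diagonal'' running from $\{l=1,r=N-1\}$ to $\{l=N-1,r=1\}$ is exactly the set of pairs satisfying $l+r\geq N$. First I would invoke Proposition~\ref{propAntiDiag2}: for such pairs the number of terms is $A=\ceil*{\frac{N-l}{r}}=1$, so the periodic state collapses to a single computational basis vector, $\ket{\Psi^n_{l,r}}=\ket{l}$. It therefore suffices to show that $QFT\ket{l}$ is separable for every $l\in [\![ 0,N-1 ]\!]$.

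Next I would write out the action of the QFT on $\ket{l}$ and exhibit its factorization. Expanding the output index $k=\sum_{m=1}^{n} k_m 2^{n-m}$ in binary, one has
\begin{equation}
QFT\ket{l} = \frac{1}{\sqrt{N}}\sum_{k=0}^{N-1} e^{2\pi i\, l k/N}\ket{k} = \bigotimes_{m=1}^{n}\frac{1}{\sqrt{2}}\Big(\ket{0}+e^{2\pi i\, l/2^{m}}\ket{1}\Big).
\end{equation}
The sum over $k\in [\![ 0,N-1 ]\!]$ splits as an independent sum over the bits $k_1,\dots,k_n\in\{0,1\}$, and because the exponent $2\pi i\, l k/N$ is linear in the $k_m$, the summand factorizes across the $n$ tensor slots. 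This is precisely the classical product form of the QFT on a basis state, each factor being a single-qubit vector of the shape $\alpha_m\ket{0}+\beta_m\ket{1}$.

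Finally I would conclude: a tensor product of single-qubit vectors is by definition a fully factorized state, i.e. a point of the Segre variety $X=Seg(\PP^1\times\PP^1\times\PP^1\times\PP^1)$, which in the stratification of the nullcone is the orbit $Gr_1$. Hence $QFT\ket{\Psi^n_{l,r}}$ is separable for all $(l,r)$ with $l+r\geq N$, as claimed. The only genuine content of the argument is the reduction step, namely recognizing via Proposition~\ref{propAntiDiag2} that below the anti-diagonal the periodic state degenerates to a single basis vector; once that is in place, the remaining obstacle is merely the bookkeeping that the phase factorizes bit by bit, which is entirely routine since the QFT is well known to act as a product operation on computational basis states.
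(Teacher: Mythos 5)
Your proof is correct and follows essentially the same route as the paper: reduce via Proposition~\ref{propAntiDiag2} to the observation that states with $l+r\geq N$ are single computational basis vectors, then conclude by the product representation of the QFT on basis states (which the paper invokes as Remark~\ref{QFTbasisState} rather than deriving explicitly as you do). The extra explicit factorization you include is a correct spelling-out of that remark, not a different argument.
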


\begin{proof}
We have demonstrated in the Proposition \ref{propAntiDiag2} that the number of states in the writting of the periodic states $\ket{\Psi^n_{l,r}}$ with $l+r \geq N$ (under or on the anti-diagonal) is one, and thus they are computational basis states. By using the Remark \ref{QFTbasisState}, we can conclude that after the application of QFT, we always will retrieve a factorized and separable state.
\end{proof}

\subsection{Entaglement of Periodic states through QFT}

In this subsection, we use the 4-qubit hyperdeterminant $|Det_{2222}|$ as a quantitative measure of entanglement to investigate the amount of entanglement after each gate composing the Quantum Fourier Transform. We focus on the case when only 4-qubit periodic states are input of the QFT. 

~

\begin{figure}[!h]
   \begin{align*}
   \Qcircuit @C=1em @R=.7em {
   \lstick{\ket{x_1}}  & \gate{H}   & \gate{R_2}  & \gate{R_3} & \gate{R_4} & \qw          & \qw              & \qw             & \qw         & \qw              & \qw & \qw & \multigate{3}{SWAP} \\
   \lstick{\ket{x_2}}  & \qw           & \ctrl{-1}        & \qw             & \qw             & \gate{H}  & \gate{R_2} & \gate{R_3} & \qw         & \qw              & \qw & \qw & \ghost{SWAP} \\
   \lstick{\ket{x_3}}  & \qw           & \qw               & \ctrl{-2}      & \qw             & \qw          & \ctrl{-1}       & \qw             & \gate{H} & \gate{R_2} & \qw & \qw &  \ghost{SWAP}\\
   \lstick{\ket{x_4}}  & \qw           &\qw                & \qw             & \ctrl{-3}      & \qw          & \qw              & \ctrl{-2}      & \qw         & \ctrl{-1}       & \gate{H} & \qw & \ghost{SWAP} \\
   }
   \end{align*}
   \caption{Eleven gates composing the quantum circuit for 4-qubit Quantum Fourier Transform.}
   \label{QFT_steps}
   \end{figure}
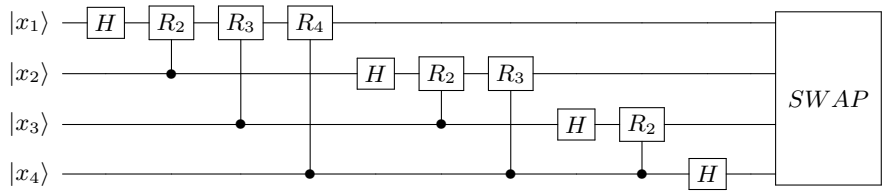

~

In fact, when we look at the absolute value of $Det_{2222}$ after each unitary gate composing QFT, when applied on a 4-qubit periodic state, we can distinguish three different cases:
\begin{enumerate}
\item The value of hyperdeterminant is never equal to zero before, after and during QFT,
\item The value of hyperdeterminant is zero before QFT, and then become non-null after a particular step of QFT and does not vanish through next gates,
\item The value of hyperdeterminant is always equal to zero before, after and during QFT.
\end{enumerate}

~

The first case only happen when we have as input the couples $(1,3)$ and $(2,3)$, with $(l,r)$ representing a given periodic state with $l$ the shift and $r$ the period. In both cases, the periodic state generated belong to the family $G_{abc0}$, which does not annihilate the 4-qubit hyperdeterminant. After applying the QFT to these states, we retrieve states belonging to $G_{abcd}$ generic family. We plot in Figure \ref{hyper_shor_1} the evolution of $|Det_{2222}|$ throughout the gates composing the QFT.  For instance, at Step 0 we retrieve the input periodic state ; at Step 1 we apply Hadamard gate to the initial state ; at Step 3 we retrieve the state resulting of the application of first Hadamard, the first c-$R_2$ and the first c-$R_3$ to the intial periodic state ; and so on. We observe a global decreasing of the absolute value of the hyperdeterminant while we move to a more generic Verstraete \textit{et al.} family.

\begin{figure}[!h]
\begin{center}
  \includegraphics[width = 12cm]{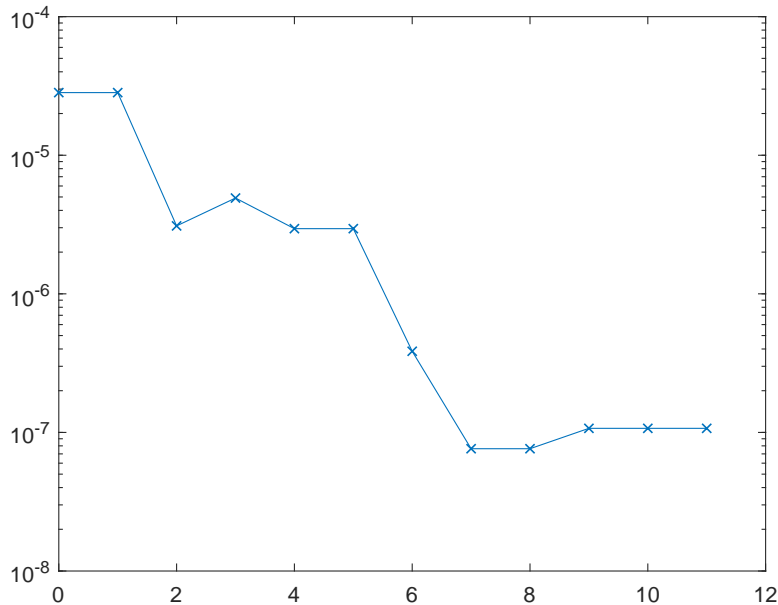}
  \caption{Evolution of the absolute value of hyperdeterminant in function of the QFT steps, for the periodic state with $(l,r)=(1,3)$.}\label{hyper_shor_1}
\end{center}
\end{figure}

~

The second case happen for couples $(l,r) \in \mathcal{S}$, with $\mathcal{S}$ the set defined bellow in Equation (\ref{equ_set_second_case}) :
\begin{equation}\label{equ_set_second_case}
\mathcal{S}= \{(0,3),(0,5),(2,1),(3,1),(3,3),(4,1),(4,3),(5,1),(5,3),(6,1),(6,3),(7,1),(9,1),(10,1),(11,1),(12,1)   \} ~ .
\end{equation}

We start with states in the dual variety $X^*$, where $Det_{2222}$ vanishes. Then depending on the periodic state, the value of the hyperdeterminant will become non-null after one of the c-$R_k$ gate. We can distinguish different types of behavior concerning the evolution of $|Det_{2222}|$ throughout QFT gates, represented in Figure \ref{hyper_shor_2}.

\begin{figure}[!h]
\begin{center}
  \centerline{\includegraphics[width = 10.5cm]{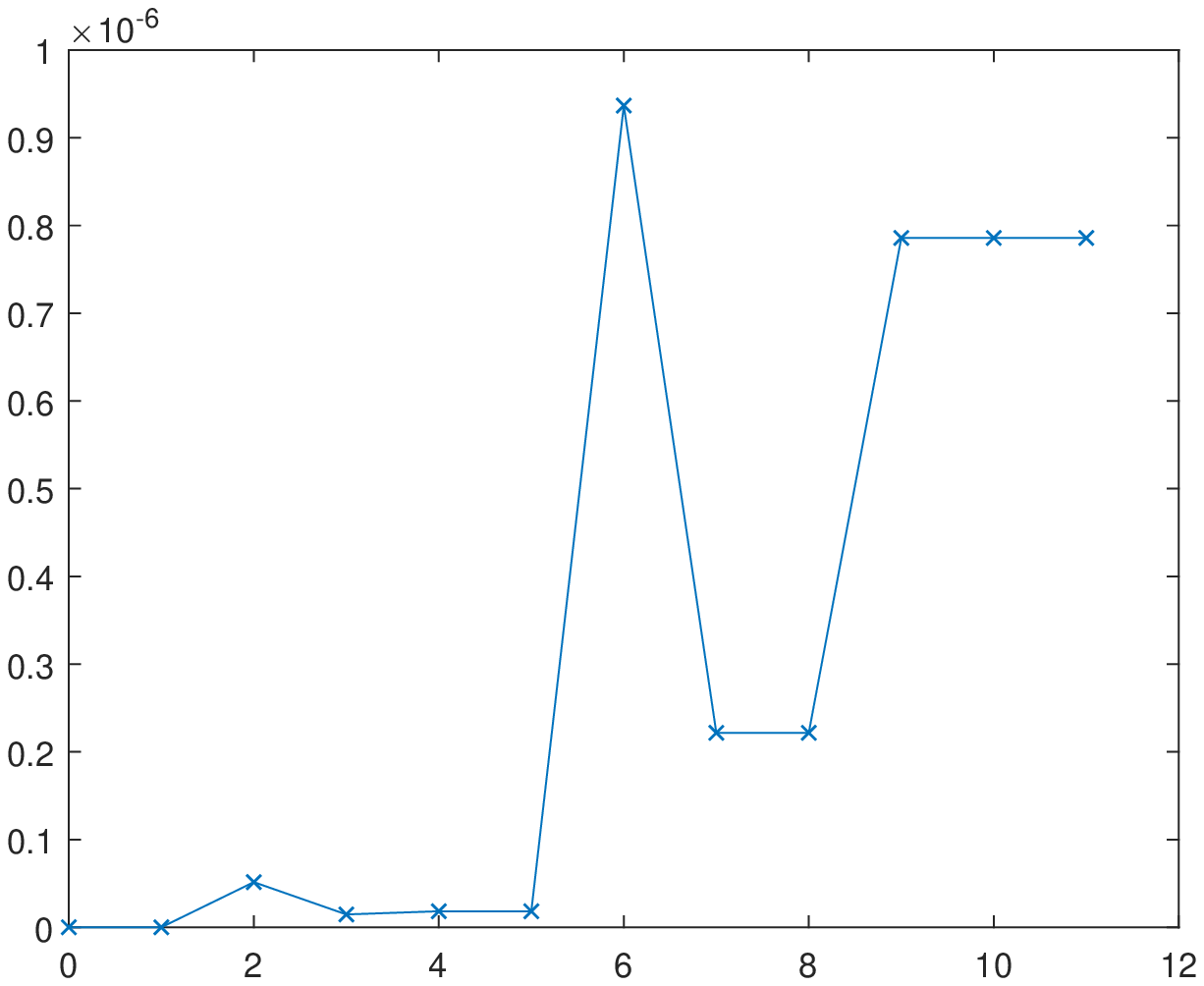}\includegraphics[width = 10.5cm]{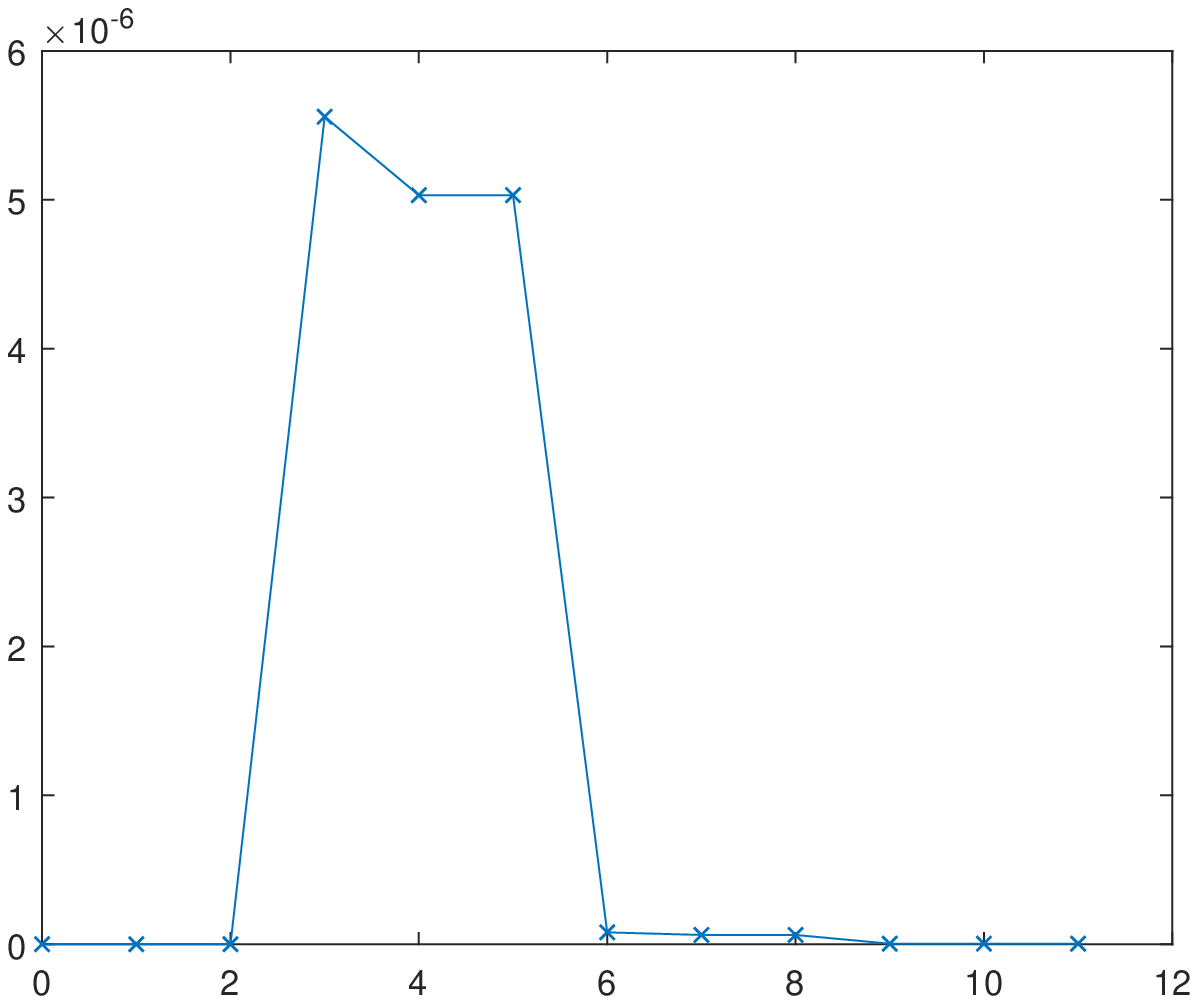}}
  \centerline{\includegraphics[width = 10.5cm]{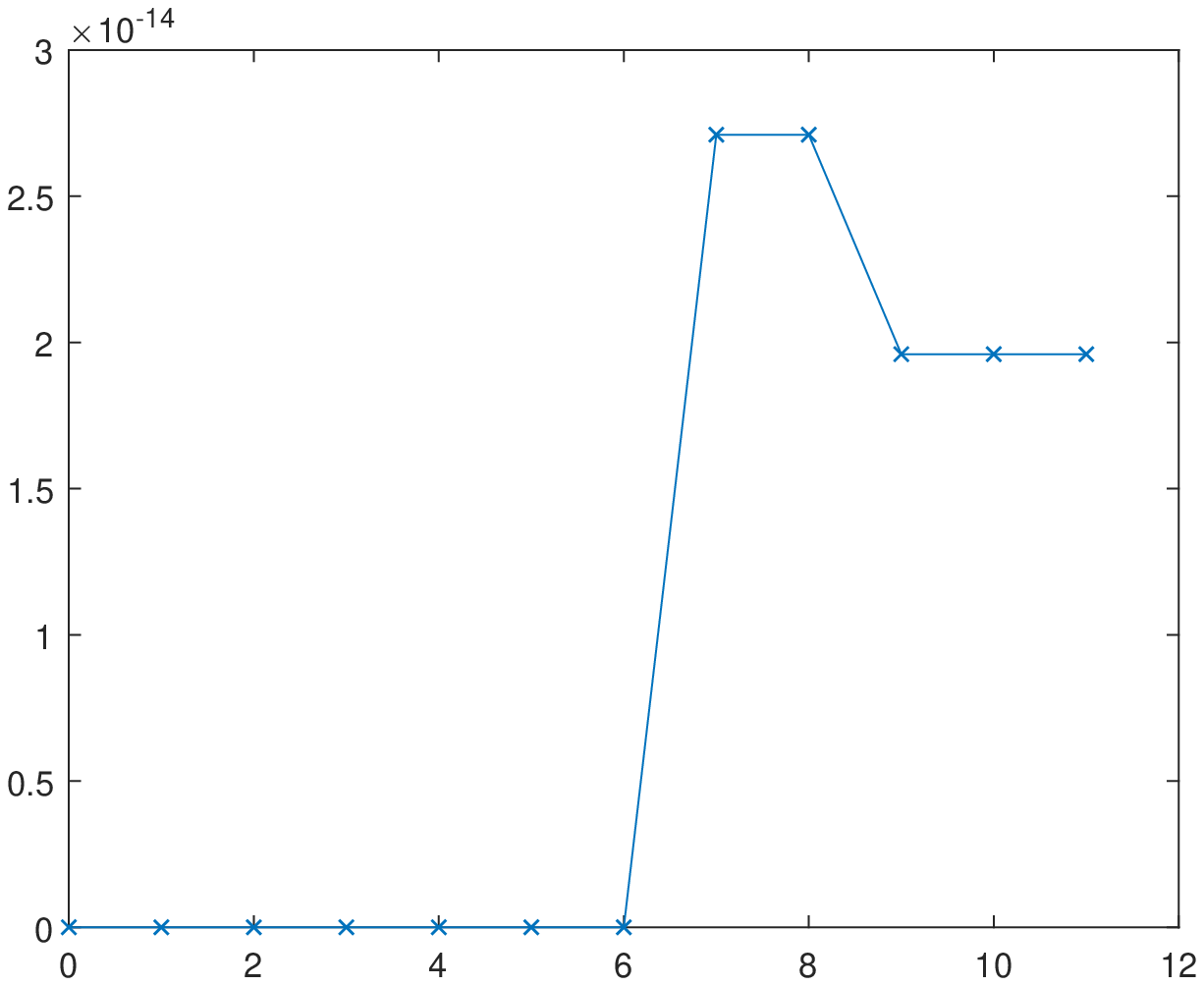}\includegraphics[width = 10.5cm]{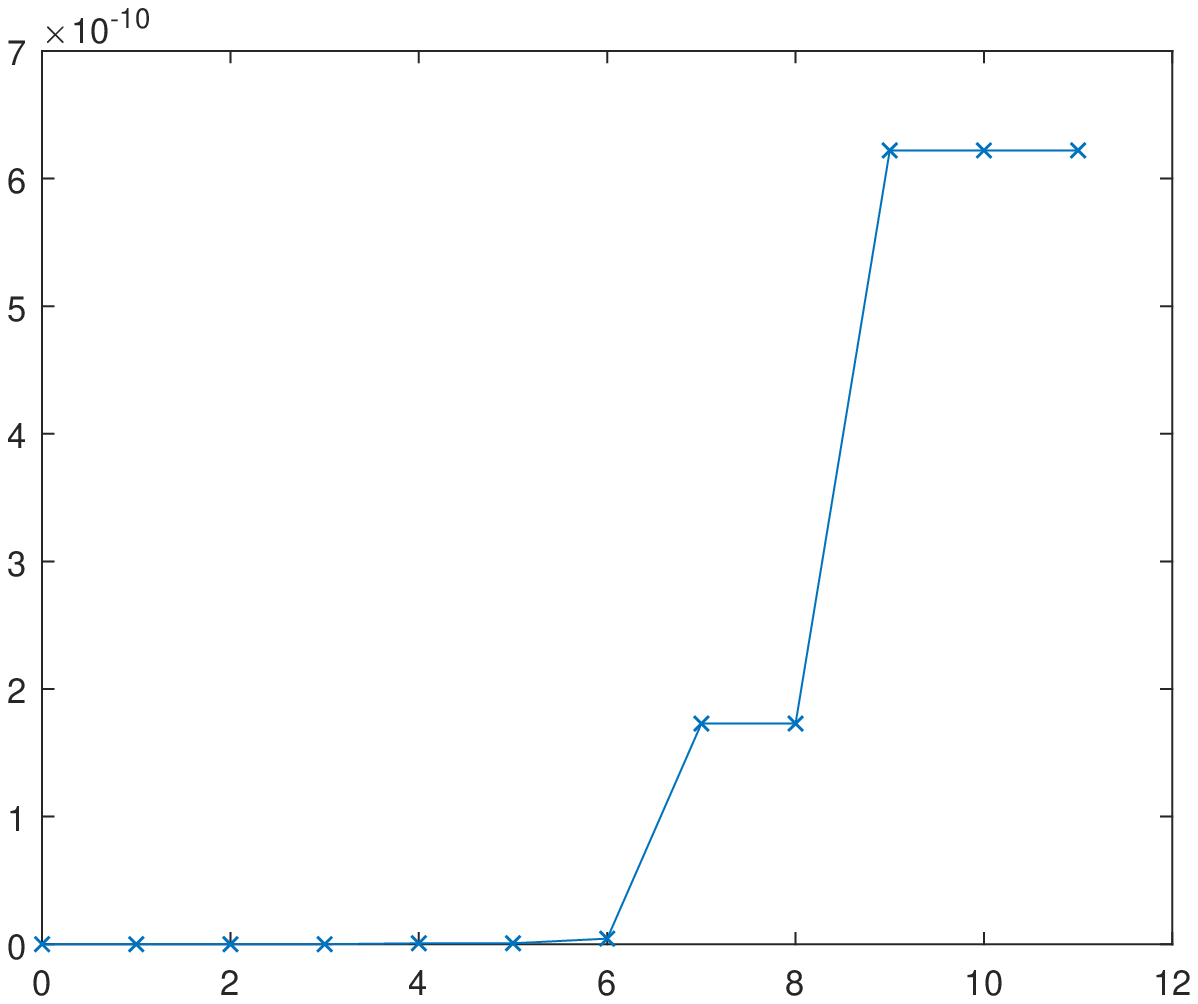}}
  \caption{Evolution of the absolute value of hyperdeterminant in function of the QFT steps, for periodic states with $(l,r)=(0,3)$ (top left), $(l,r)=(5,3)$ (top right), $(l,r)=(11,1)$ (bottom left) and $(l,r)=(9,1)$ (bottom right).}\label{hyper_shor_2}
\end{center}
\end{figure}

~

The third case (when the value of hyperdeterminant is always zero during the QFT) happen with all remaining couples $(l,r) \not\in \mathcal{S} \cup \{(1,3),(2,3)\}$. It does not necessarily mean that QFT does not modify entanglement class and type of these periodic states, but it means that all gates composing the QFT does not bring these states out of the dual variety. 

~

Besides, we remark that the value of the hyperdeterminant does not change after the application of Hadamard gates (first, fifth, eighth and tenth gate) and the last Swap gate, as expected ($Det_{2222}$ is invariant under local unitary operation and permutation of qubits). In fact, changes only appear after applying the c-$R_k$ gates, and thus they are responsible of generation or modification of entanglement by the Quantum Fourier Transform.

~

Moreover, for most of the 4-qubit periodic states (92.5\%), the hyperdeterminant measure does not change after or during the QFT (staying at zero), and this has already been pointed out by Shimoni \textit{et al.} in \cite{Shimoni} for the Groverian measure of entanglement. One can also remark that the hyperdeterminant is always null and never changes for a couple $(l,r)$ when $r$ is a power of 2, as it was mentioned by Kendon and Munro in \cite{Kendon}. However, we can have a qualitative change of entanglement, as it has been seen in the previous sections. We also observe that the only values of $r$ for which $Det_{2222}$ changes is $r=1$, $r=3$ and $r=5$.

\subsection{Entanglement and Quantum Fourier Transform} \label{3qubit_shor_eett}

We now no more consider  the specific case of Shor's algorithm, to come back to general quantum states. In this subsection, we will briefly discuss the influence of applying the QFT on the entanglement nature of quantum systems. It is well known that Quantum Fourier Transform can "create" or modify the entanglement when it is applied in the general case, but we yet do not know how this is actually happening. We will thus try to give some elements of answer.

\subsubsection{Linear Shift Invariant property}

One of the most important property of the Quantum Fourier Transform is what is called the \textit{Linear Shift Invariant} property. It is known that if a $n$-qubit state shows some periodic behavior with a shift, after the application of QFT, we will retrieve a non-shifted state (a periodic state starting with the basis state $\ket{00\dots 0}$) with some periodic behavior directly related to the previous period and the number of basis states $Q=2^n$.

~

In their work, Most \textit{et al.} investigated this property to deduce an approximative description of periodic quantum states after QFT. We wanted to quote precisely the authors \cite{Most} : 

"In analogy to the discrete Fourier transform (DFT), the QFT is used in order to reveal periodicities in its input. In particular, the amplitudes of the state $\ket{\Psi^n_{r,l}}$  make out a periodic series, and when the DFT is applied to it, the resulting series can be approximated by a periodic series of the same sort, that is, one in which the indices of the nonzero terms make out an arithmetic progression. In the resulting series, though, the common difference is $Q/r$, the initial term is zero, and additional phases are added. [...] Since applying the QFT to a quantum state is equivalent to applying the DFT to its amplitudes, the action of the QFT on periodic states can be approximately described as: $\ket{\Psi^n_{r,l}} \xrightarrow{QFT} \ket{\Psi^n_{Q/r,0}}$".

~

While this approximation may be needed to compute more easly some numerical measure of entanglement, we believe that this numerical  measure does not preserve the entanglement nature of periodic states after QFT in terms of SLOCC-orbit. We would like to illustrate this, with a basic. Let us consider the following periodic state:

\begin{equation}
 \ket{\Psi^3_{2,2}}= \frac{1}{\sqrt{3}} \big( \ket{010} + \ket{100} + \ket{110} \big) = \frac{1}{\sqrt{3}} \Big( \ket{01} + \ket{10} + \ket{11} \Big)\otimes \ket{0}  ~ .
\end{equation}

This state certainly belongs to the $\mathcal{O}_4$ orbit of the 3-qubits classification (see Fig. \ref{3onion}). When we apply the Quantum Fourier Transform to this state, we retrieve the state $ \ket{\Psi^*}$:

\begin{equation} 
\ket{\Psi^*} = QFT\ket{\Psi^3_{2,2}}= \frac{1}{\sqrt{8\times 3}} \Big( 3\ket{000} - \ket{001} - \ket{010} - \ket{011} + 3\ket{100} -  \ket{101} - \ket{110} -  \ket{111}  \Big)  ~ .
\end{equation}

This last state show in fact some periodic properties, and it can be written as \\$\ket{\Psi^*} = \alpha \ket{+}^{\otimes 3} + \beta \ket{000} + \beta \ket{100}$ $= \alpha \ket{+}^{\otimes 3} + \delta \ket{+}\ket{00}$ $= \ket{+}\otimes \big(\alpha \ket{+}^{\otimes 2} + \delta\ket{0}^{\otimes 2} \big)$, and thus belongs to the $\mathcal{O}_3$ orbit. 

~

However, if we focus on the periodic state with the shift $l=0$, and the period $r=\frac{N}{2} = \frac{8}{2} = 4$, we obtain the following state

\begin{equation}
 \ket{\Psi^3_{4,0}}= \frac{1}{\sqrt{2}} \big( \ket{000} + \ket{100} \big) = \frac{1}{\sqrt{2}}\ket{+} \otimes \ket{00},
\end{equation}

which is a separable state, and it is not $\text{SLOCC}$ equivalent to the state $\ket{\Psi^*}$. Therefore, from a qualitative point of view, the approximation used in \cite{Most} cannot be used. However, if we compute the absolute value of the 3-qubit Cayley hyperdeterminant and consider it as a measure of entanglement, we retrieve the same value (which is zero) for both $\ket{\Psi^*}$ and $\ket{\Psi^3_{4,0}}$ states.

\subsubsection{Other remarks on Quantum Fourier Transform}

Usually, when the Quantum Fourier Transform is defined, some of its properties are also mentioned. One basic property of QFT is that it sends the basis state $\ket{0}^{\otimes n}$ to the state $\ket{+}^{\otimes n}$, and so it has the property of building fully parallelized states. This property can be generalized to any basis state:

\begin{rem} \label{QFTbasisState}
If we apply the Quantum Fourier Transform to one of the computational basis states, then we always retrieve a separable state. We can in fact directly deduce this from the \textbf{product representation} recalled in equation (5.4)  of section II.5 of the book \cite{QCQI}. For a basis state $\ket{j_1 j_2 \cdots j_n}$, we retrieve the factorized state $\dfrac{1}{\sqrt{2^n}} (\ket{0} + e^{2i\pi 0 \cdot j_n}\ket{1}) \cdot (\ket{0} + e^{2i\pi 0 \cdot j_{n-1}j_n}\ket{1}) \cdots (\ket{0} + e^{2i\pi 0 \cdot j_1j_2\dots j_n}\ket{1}) $ after the application of the Quantum Fourier Transform.
\end{rem}

 ~

But this property is also related to entanglement, since one can deduce that any basis state stay in the set of separable states after the application of Quantum Fourier Transform. However, it is not true for all separable states, as we can see in Table \ref{periodic4res} and Table \ref{periodicAfterQFT4res}, and at the end of this section. 

~

Tables \ref{periodic4res}, \ref{periodicAfterQFT4res} tell us that $QFT$ can transform separable states to entangled one and can change the nature of entanglement. Let us propoose another simple but clear example for 3-qubits systems. Let us apply the QFT to the well known state $\ket{W}=\frac{1}{\sqrt{3}}\big(\ket{001}+\ket{010}+\ket{100}\big)$. We know that the $\ket{W}$ state belongs to the $\mathcal{O}_5$ orbit for the 3-qubits. After QFT we obtain the following state : 

\begin{equation}\begin{split} TFQ_8 \ket{W} = \frac{1}{\sqrt{8}}\frac{1}{\sqrt{3}}\Big(3\ket{000} + (\omega + \omega^2 + \omega^4)\ket{001} + \omega^2 \ket{010}+ (\omega^3+\omega^6+\omega^4)\ket{011}  \\+ \ket{100} + (\omega^5 + \omega^2 + \omega^4)\ket{101} + \omega^6\ket{110}+ (\omega^7 + \omega^6 + \omega^4)\ket{111} \Big) \end{split},\end{equation}
with $\omega = e^{\frac{2i\pi}{8}}$.

~

By computing the Cayley hyperdeterminant, one can verify that it is equal to $\dfrac{-i}{36}$, which means that this state belongs to the $\mathcal{O}_6$ orbit (the $\ket{GHZ}$ orbit). Therefore, we modified the entanglement class of the state by applying the QFT.

~

It can thus be interesting to look at the equivalence classes related to the action of the group $\text{SLOCC} \cup \{QFT,QFT^{-1}\}$ or $\text{SLOCC} \cup \{H,c \text{-} R_k, SWAP\}$ (where $c-Rk$ is the controled $R_k$ gate, see Figure \ref{QFT_steps}). It could be interesting to study entanglement generated by circuit of $H$, $SWAP$, and $c \text{-} R_k$ gates (composing $QFT$), as it was done by Bataille \textit{et al.} in \cite{bataille} for $c \text{-}Z$ and $SWAP$ gates, to understand more deeply the influence of such gates on entanglement. 

~

As a first step in that direction we choose to focus on the case of 2-qubits and 3-qubits to study this last question. In the case of 2-qubit systems, we can find that there is only one equivalence class under the action of $G'=\text{SLOCC} \cup \{QFT,QFT^{-1}\}$. In fact, if we take the state $\ket{\Psi_1}$ defined by 

\begin{equation}
\ket{\Psi_1} = \frac{1}{\sqrt{2}} \Big( \ket{00} + \ket{01} \Big) ~ .
\end{equation}

When we apply the Quantum Fourier Transform to this state, we retrieve the entangled state 

\begin{equation}
\ket{\Psi_2} = \frac{1}{2\sqrt{2}} \Big( 2\ket{00} + (1+i)\ket{01} + (1-i)\ket{11} \Big) ~ .
\end{equation}

We know that for 2-qubit systems, there is only two $\text{SLOCC}$ entanglement classes : separable or entangled (EPR). Since we can move from the separable state $\ket{\Psi_1}$ to the entangled state $\ket{\Psi_2}$, then there is only one orbit under the action of $G'$ . 

~

In order to investigate the 3-qubit case, let us compute the QFT for several examples of 3-qubit states: 

\begin{equation}
\ket{\Phi_1} = \frac{1}{\sqrt{3}} \Big( \ket{001}+\ket{010} + \ket{100} \Big) \in \mathcal{O}_5 ,
\end{equation}
\begin{equation} \begin{split}
\ket{\Phi_2} = \frac{1}{\sqrt{24}} \Big( 3\ket{000} + (\omega + \omega^2 + \omega^4)\ket{001} + \omega^2 \ket{010}+ (\omega^3+\omega^6+\omega^4)\ket{011}  \\+ \ket{100} + (\omega^5 + \omega^2 + \omega^4)\ket{101} + \omega^6\ket{110}+ (\omega^7 + \omega^6 + \omega^4)\ket{111} \Big) \in \mathcal{O}_6 
 \end{split},\end{equation}
\begin{equation}
\ket{\Phi_3} = \frac{1}{\sqrt{3}} \Big( \ket{100}+\ket{110} + \ket{111} \Big) \in \mathcal{O}_3 ,
\end{equation}
\begin{equation}
\ket{\Phi_4} = \frac{1}{\sqrt{2}} \Big( \ket{110} + \ket{111} \Big) \in \mathcal{O}_4 ,
\end{equation}
\begin{equation} \begin{split}
\ket{\Phi_5} = \frac{1}{4} \Big( 2\ket{000} + (\omega^7 - i)\ket{001} - (1+i) \ket{010}+ (i+\omega^5)\ket{011} \\  + (\omega^3 - i)\ket{101} + (i-1)\ket{110}+ (\omega + i)\ket{111} \Big) \in \mathcal{O}_6 
 \end{split},\end{equation}
\begin{equation}
\ket{\Phi_6} = \frac{1}{\sqrt{2}} \Big( \ket{001} + \ket{010} \Big) \in \mathcal{O}_3 ,
\end{equation}
\begin{equation}
\ket{\Phi_7} = \frac{1}{\sqrt{2}} \Big( \ket{101} + \ket{111} \Big) \in \mathcal{O}_1 ,
\end{equation}
\begin{equation} 
\ket{\Phi_8} = \frac{1}{4} \Big( 2\ket{000} + ( \omega^5 + \omega^7)\ket{001} + (\omega^5+\omega^7)\ket{011}  - 2 \ket{100} + (\omega + \omega^3) \ket{101} + (\omega + \omega^3 )\ket{111} \Big) \in \mathcal{O}_3 ,
\end{equation}
\begin{equation}
\ket{\Phi_9} = \frac{1}{\sqrt{2}} \Big( \ket{001} + \ket{011} \Big) \in \mathcal{O}_1 ,
\end{equation}
\begin{equation}
\ket{\Phi_{10}} = \frac{1}{\sqrt{2}} \Big( \ket{000} + \ket{101} \Big) \in \mathcal{O}_2 ,
\end{equation}
\begin{equation} \begin{split}
\ket{\Phi_{11}} = \frac{1}{4} \Big( 2\ket{000} + (1 + \omega^5 )\ket{001} + (1+i) \ket{010}+ (1+\omega^7)\ket{011} \\ + (1+ \omega)\ket{101} + (1 - i )\ket{110}+ (1 + \omega^3 )\ket{111} \Big) \in \mathcal{O}_6 
 \end{split},\end{equation}
\begin{equation}
\ket{\Phi_{12}} = \frac{1}{\sqrt{2}} \Big( \ket{000} + \ket{011} \Big) \in \mathcal{O}_3 ,
\end{equation}
\begin{equation}
\ket{\Phi_{13}} = \frac{1}{\sqrt{2}} \Big( \ket{000} + \ket{110} \Big) \in \mathcal{O}_4 ,
\end{equation}
\begin{equation} 
\ket{\Phi_{14}} = \frac{1}{4} \Big( 2\ket{000} + (1 - i )\ket{001} + (1 + i) \ket{011} + 2\ket{100} + (1-i)\ket{101} + (1 + i )\ket{111} \Big) \in \mathcal{O}_3 ,
\end{equation}
\begin{equation}
\ket{\Phi_{15}} = \frac{1}{\sqrt{2}} \Big( \ket{000} + \ket{010} \Big) \in \mathcal{O}_1  ~ .
\end{equation}

~

These states are related to each other, using the QFT, as detailled bellow

$$ \ket{\Phi_1} \xrightarrow[]{QFT} \ket{\Phi_{2}} \xrightarrow[]{QFT} \ket{\Phi_{3}} ~~, ~~~~ \ket{\Phi_4} \xrightarrow[]{QFT} \ket{\Phi_{5}} \xrightarrow[]{QFT} \ket{\Phi_{6}}  ~~, ~~~~  \ket{\Phi_7} \xrightarrow[]{QFT} \ket{\Phi_{8}} \xrightarrow[]{QFT} \ket{\Phi_{9}}, $$
$$ \ket{\Phi_{10}} \xrightarrow[]{QFT} \ket{\Phi_{11}} \xrightarrow[]{QFT} \ket{\Phi_{12}} ~~, ~~~~ \ket{\Phi_{13}} \xrightarrow[]{QFT} \ket{\Phi_{14}} \xrightarrow[]{QFT} \ket{\Phi_{15}}  ~ .$$

~

By using the visual representation proposed in Figure \ref{graph_slocc_qft}, one can verify that, starting from a specific state $\ket{\Phi_{i}}$, we can reach any state in any orbit, by applying a succession of operations in $\text{SLOCC}$ group and/or the QFT (and its inverse). 

  \begin{figure}[!h]
\begin{center}
  \includegraphics[width = 12cm]{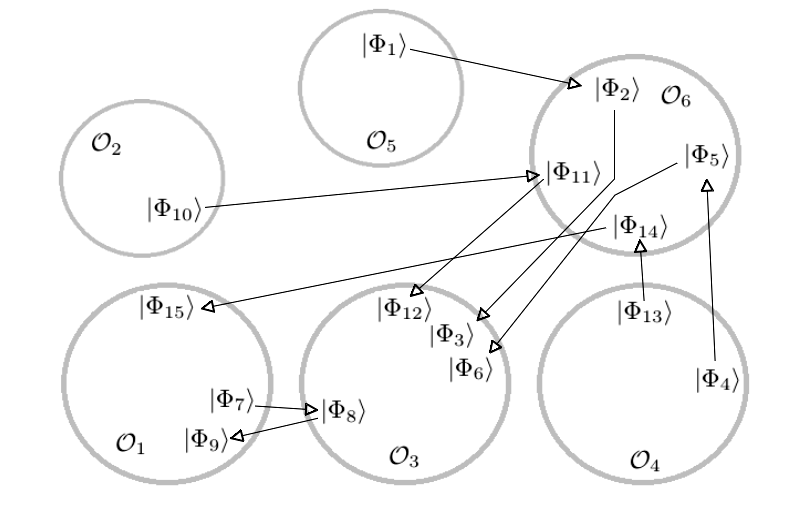}
  \caption{Quantum states in the same grey circle belong to the same $\text{SLOCC}$ orbit. White-headed arrows correspond to the application of the Quantum Fourier Transform (the state close to the arrow's head is the result of the application of QFT to the state at the arrow's root).}\label{graph_slocc_qft}
\end{center}
\end{figure}

~

We can thus conclude that there is only one orbit under the action of $G'$, both in 2-qubits 3-qubits cases. It can be more difficult to investigate the 4-qubit case, because there is an infinite number of $\text{SLOCC}$ orbit. However, we can first tackle the problem by only considering Verstraete \textit{et al.} families of sub-families, and the 9 orbit of the nullcone. One can use for that the results presented in Table \ref{periodic4res} and Table \ref{periodicAfterQFT4res} to extract 4-qubit states equivalents by Quantum Fourier Transform.

% ~
% 
% However we can remark that Quantum Fourier Transform in fact has the ability to manipulate entanglement, and either for general states and states showing periodic properties. 

\section{Conclusion}\label{conclusion}

In this work, we were able to produce a detailed study of the  entanglement classes emerging during Grover's and Shor's algorithm, in the case of 4-qubit systems. For Grover's algorithm, we determined what type of entanglement is generated depending on the marked elements and their number. For Shor's algorithm, we focused on periodic states generated after the first measure in the period-finding algorithm. We presented entanglement families of periodic states depending on their parameters (shift and period), and how these classes change after the application of the Quantum Fourier Transform. Some of these results were generalized to any $n$-qubit version of the algorithm. 

Our analysis shows that some $4$-qubit states never shows up in both algorithms. It is the case of the famous $\ket{W}$ state (as it was observed in \cite{HJN} for Grover's algorithm on some tripartite systems for the standard regime) but also $\ket{L_{abc_2}}$ or $\ket{L_{ab_3}}$. So far we do not have any (geometric) explanations of this phenomena.

Regarding the $QFT$ and its influence on entanglement we have shown that for $n=2$ and $n=3$ qubits, we have shown that all quantum states are equivalent up to a sequence of transformation of the set $\SLOCC \cup\{QFT\}$.  
\section*{Acknowledgments}\label{remerciements}

This work is supported by the R\'egion Bourgogne Franche-Comt\'e, project PHYFA (contract 20174-06235) and  the French “Investissements d’Avenir” programme, project ISITE-BFC (contract ANR-15-IDEX-03).

\newpage

\appendix
\section{Examples of orbits related with sets of marked elements}\label{appendixA}
The following Table provide examples of sets of marked elements which allows to reach the corresponding orbits and Verstraete families by running Grover's algorithm in the 4-qubit case.
\begin{table}[!h]
 \begin{center}
  \begin{tabular}{|c|c|}
  \hline
   Orbit & $~~~S~~~$  \\
   \hline
  $G_{abcd}$ &  $\{ \ket{0000}, \ket{0001}, \ket{0010} ,\ket{0101}, \ket{1010}, \ket{1111}  \}$   \\
   \hline
  $G_{abc0}$ & $\{\ket{0000},\ket{1111}\}$ \\
   \hline
  $G_{00cc}$ & $\{\ket{0000}\}$ \\
   \hline
  $G_{a000}$ & $\{  \ket{0000}, \ket{0011}, \ket{1100}, \ket{1111} \}$ \\
   \hline
  $G_{ab00}$ & $\{  \ket{0000}, \ket{0011}, \ket{1101}, \ket{1110} \}$ \\
   \hline
  $L_{abc_2}$ & $\emptyset$  \\  
  \hline
  $L_{00c_2}$ & $\{\ket{0000},\ket{0011}\}$  \\  
  \hline
  $L_{aa0_2}$ & $\{\ket{0000},\ket{0101}\}$  \\  
  \hline
  $L_{a00_2}$ & $\{\ket{0000}, \ket{0110}, \ket{1001}, \ket{1111} \}$  \\  
  \hline
  $L_{ab0_2}$ & $\{ \ket{0000}, \ket{0001}, \ket{0010}, \ket{0101}, \ket{1010} \}$ \\  
  \hline
  $L_{a_2b_2}$ & $\{ \ket{0000}, \ket{0001}, \ket{0010}, \ket{0100}, \ket{1001}  \}$  \\  
  \hline
  $L_{0_2b_2}$ & $\{ \ket{0000}, \ket{0001}, \ket{0110}  \}$  \\  
  \hline
  $L_{ab_3}$ & $\emptyset$  \\  
  \hline
  $L_{0b_3}$ & $\emptyset$  \\  
  \hline
  $L_{a0_3}$ & $\emptyset$  \\  
  \hline
  $L_{a_4}$ & $\{ \ket{0000}, \ket{0001}, \ket{0010}, \ket{0101}, \ket{0110}, \ket{1101} \}$  \\  
  \hline
  $L_{a_20_{3\oplus \overline{1}}}$ & $\{ \ket{0000}, \ket{0001}, \ket{1110}  \}$  \\  
  \hline
  $Gr_8$ & $\{\ket{0000},\ket{0111}\}$  \\  
  \hline
  $Gr_7$ & $\{  \ket{0000}, \ket{0001}, \ket{0110}, \ket{1011} \}$  \\  
  \hline
  $Gr_6$ & $\{  \ket{0000}, \ket{0001}, \ket{0010}, \ket{1100} \}$  \\  
  \hline
  $Gr_5$ & $\{ \ket{0000}, \ket{0011}, \ket{0101}, \ket{1001}  \}$ \\  
  \hline
  $Gr_4$ & $\{\ket{0000},\ket{0001}\}$  \\  
  \hline
  $Gr_3$ & $\{  \ket{0000}, \ket{0001}, \ket{0010}, \ket{0100} \}$ \\  
  \hline
  $Gr_2$ & $\{  \ket{0000}, \ket{0001}, \ket{0110}, \ket{0111}  \}$  \\  
  \hline
  $Gr_1$ & $\{  \ket{0000}, \ket{0001}, \ket{0010}, \ket{0011} \}$  \\  
  \hline
  \end{tabular}
\caption{Examples of family of marked elements $S$ and the corresponding family or orbit reached by the algorithm in the $2\times 2\times 2 \times 2$ case.}\label{2222res}
 \end{center}

\end{table}

\newpage

\section{3-qubits periodic states}\label{appendixC}

\begin{table}[!h]
 \begin{center}
  \begin{tabular}{|l||c|c|c|c|c|c|c|c|}
  \hline
   \backslashbox{$r$}{$l$} & 0 & 1 & 2 & 3 & 4 & 5 & 6 & 7    \\
   \hline
   \hline
  1 & $\calO_1$ & $\calO_6$ & $\calO_4$ & $\calO_6$ & $\calO_1$ & $\calO_3$ & $\calO_1$ & $\calO_1$ \\
   \hline
  2 & $\calO_1$ & $\calO_1$ & $\calO_4$ & $\calO_4$ & $\calO_1$ & $\calO_1$ & $\calO_1$ & $\calO_1$ \\
   \hline
  3 & $\calO_5$ & $\calO_5$ & $\calO_6$ & $\calO_2$ & $\calO_3$ & $\calO_1$ & $\calO_1$ & $\calO_1$ \\
   \hline
  4 & $\calO_1$ & $\calO_1$ & $\calO_1$ & $\calO_1$ & $\calO_1$ & $\calO_1$ & $\calO_1$ & $\calO_1$ \\
   \hline
  5 & $\calO_2$ & $\calO_6$ & $\calO_2$ & $\calO_1$ & $\calO_1$ & $\calO_1$ & $\calO_1$ & $\calO_1$ \\
  \hline
  6 & $\calO_4$ & $\calO_4$ & $\calO_1$ & $\calO_1$ & $\calO_1$ & $\calO_1$ & $\calO_1$ & $\calO_1$ \\
  \hline
  7 & $\calO_6$ & $\calO_1$ & $\calO_1$ & $\calO_1$ & $\calO_1$ & $\calO_1$ & $\calO_1$ & $\calO_1$ \\ 
  \hline
  \end{tabular}
\caption{3-qubit $\text{SLOCC}$ orbits of periodic states depending on their shift $l$ and period $r$.}\label{periodic3res}
 \end{center}
\end{table}

\begin{table}[!h]
 \begin{center}
  \begin{tabular}{|l||c|c|c|c|c|c|c|c|}
  \hline
   \backslashbox{$r$}{$l$} & 0 & 1 & 2 & 3 & 4 & 5 & 6 & 7    \\
   \hline
   \hline
  1 & $\calO_1$ & $\calO_6$ & $\calO_6$ & $\calO_6$ & $\calO_6$ & $\calO_6$ & $\calO_6$ & $\calO_1$ \\
   \hline
  2 & $\calO_1$ & $\calO_1$ & $\calO_3$ & $\calO_3$ & $\calO_3$ & $\calO_3$ & $\calO_1$ & $\calO_1$ \\
   \hline
  3 & $\calO_6$ & $\calO_6$ & $\calO_6$ & $\calO_6$ & $\calO_6$ & $\calO_1$ & $\calO_1$ & $\calO_1$ \\
   \hline
  4 & $\calO_1$ & $\calO_1$ & $\calO_1$ & $\calO_1$ & $\calO_1$ & $\calO_1$ & $\calO_1$ & $\calO_1$ \\
   \hline
  5 & $\calO_6$ & $\calO_6$ & $\calO_6$ & $\calO_1$ & $\calO_1$ & $\calO_1$ & $\calO_1$ & $\calO_1$ \\
  \hline
  6 & $\calO_3$ & $\calO_3$ & $\calO_1$ & $\calO_1$ & $\calO_1$ & $\calO_1$ & $\calO_1$ & $\calO_1$ \\
  \hline
  7 & $\calO_6$ & $\calO_1$ & $\calO_1$ & $\calO_1$ & $\calO_1$ & $\calO_1$ & $\calO_1$ & $\calO_1$ \\ 
  \hline
  \end{tabular}
\caption{3-qubit $\text{SLOCC}$ orbits of the resulting states after applying the QFT on periodic states depending on their shift $l$ and period $r$.}\label{periodic3resAFTERqft}
 \end{center}
\end{table}

\section{Algorithm}\label{appendixB}

\begin{algorithm}[H]
\caption{VerstraeteType}\label{algo2}
\begin{algorithmic}
\Require $Y$ an array of size 16, the 4-qubit state
\Ensure The Verstraete \textit{et al.} type of $Y$

~

\State Hess1 $\leftarrow Hess(\mathcal{Q}_1)$
\State Hess2 $\leftarrow Hess(\mathcal{Q}_2)$
\State Hess3 $\leftarrow Hess(\mathcal{Q}_3)$
\State T1 $\leftarrow T(\mathcal{Q}_1)$
\State T2 $\leftarrow T(\mathcal{Q}_2)$
\State T3 $\leftarrow T(\mathcal{Q}_3)$

~

\State \Comment{If the input form belongs to the nullcone}
\If{isInNullcone($Y$)}
\State \textbf{return} NilpotentType($Y$)
\EndIf

~

\State \Comment{The three quartics have at least a zero root}
\If{$L=0$ and $M=0$}

	\State \Comment{All the roots are simple}
	\If{$D_{xy}\neq 0$ and $Hyper \neq 0$} 
		\State \textbf{return} $G_{abc0}$
	
	~
	
	\State \Comment{All the zero roots are simple and there is a nonzero double root}
	\ElsIf{$D_{xy}\neq 0$ and $Hyper = 0$}
		\State vectCov $\leftarrow [\mathcal{L}]$
		\State eval $\leftarrow$ evaluate(vectCov,$Y$)
		\If{eval $= [0]$}
			\State \textbf{return} $G_{aa(-2a)0}$
		\Else
			\State \textbf{return} $L_{0b(\frac{b}{2})_2}$
		\EndIf
		
	~
	
	\State \Comment{All the zero roots are double}
	\ElsIf{$D_{xy}= 0$ and $H \neq 0$}	
		\State vectCov $\leftarrow [\overline{\mathcal{G}}, \mathcal{G},\mathcal{H},\mathcal{L}]$
		\State eval $\leftarrow$ evaluate(vectCov,$Y$)
		\If{eval $= [0,0,0,0]$}
			\State \textbf{return} $G_{00cc}$
		\ElsIf{eval $= [0,1,1,0]$}
			\State \textbf{return} $L_{aa0_2}$
		\ElsIf{eval $= [0,0,1,0]$}
			\State \textbf{return} $L_{00c_2}$	
		\ElsIf{eval $= [1,1,1,0]$}
			\State \textbf{return} $L_{0_2b_2}$	
		\ElsIf{eval $= [1,1,1,1]$}
			\State \textbf{return} $L_{a_20_{3\oplus \overline{1}}}$	
		\EndIf
		
	\EndIf	

~

\State \Comment{Only one of the quartics $\mathcal{Q}_i$ has a zero root then}
\State \Comment{The quartic $\mathcal{Q}_1$ has a zero root}
\ElsIf{$L=0$ and $M\neq 0$}	

	\State \Comment{$\mathcal{Q}_1$ has only simple roots}	
	\If{$Hyper \neq 0$} 
		\State \textbf{return} $G_{abc0}$
	
\algstore{myalg}
\end{algorithmic}
\end{algorithm}

\begin{algorithm}[H]
\begin{algorithmic}
\algrestore{myalg}
	
	\State \Comment{$\mathcal{Q}_1$ has a double zero root and two simple roots}
	\ElsIf{$D_{xy}=H\cdot M$ and $H^2+4M \neq 0$}
		\State vectCov $\leftarrow [\mathcal{K}_3,\mathcal{L}]$
		\State eval $\leftarrow$ evaluate(vectCov,$Y$)
		\If{eval $= [0,0]$}
			\State \textbf{return} $G_{ab00}$
		\ElsIf{eval $= [1,0]$}
			\State \textbf{return} $L_{ab0_2}$
		\ElsIf{eval $= [1,1]$}
			\State \textbf{return} $L_{a_2b_2}$	
		\EndIf

~

	\State \Comment{$\mathcal{Q}_1$ has a double nonzero root and two simple roots}
	\ElsIf{$D_{xy}\neq H\cdot M$ and T1$ \neq 0$ and T2$ \neq 0$}
		\State vectCov $\leftarrow [\mathcal{L}]$
		\State eval $\leftarrow$ evaluate(vectCov,$Y$)
		\If{eval $= [0]$}
			\State \textbf{return} $G_{abb0}$
		\ElsIf{eval $= [1]$}
			\State \textbf{return} $L_{a0c_2}$
		\EndIf
			
	~
	
	\State \Comment{$\mathcal{Q}_1$ has a triple zero root and a simple root}
	\ElsIf{$D_{xy} = H\cdot M$ and $H^2+4M =0$ and Hess2$ = 0$}	
		\State vectCov $\leftarrow [\mathcal{C},\mathcal{D},\mathcal{K}_5,\mathcal{L}]$
		\State eval $\leftarrow$ evaluate(vectCov,$Y$)
		\If{eval $= [0,0,0,0]$}
			\State \textbf{return} $G_{a000}$
		\ElsIf{eval $= [1,0,0,0]$}
			\State \textbf{return} $L_{a00_2}$
		\ElsIf{eval $= [1,1,1,0]$}
			\State \textbf{return} $L_{0b_3}$	
		\ElsIf{eval $= [1,1,0,0]$}
			\State \textbf{return} $L_{a_2a_2}$	
		\ElsIf{eval $= [1,1,1,1]$}
			\State \textbf{return} $L_{a_4}$	
		\EndIf
	
	~
	
	\State \Comment{$\mathcal{Q}_1$ has a triple nonzero root}
	\ElsIf{$D_{xy} \neq H\cdot M$ and $I_2 =0$ and $Hyper = 0$}		
		\State vectCov $\leftarrow [\mathcal{D},\mathcal{L}]$
		\State eval $\leftarrow$ evaluate(vectCov,$Y$)
		\If{eval $= [0,0]$}
			\State \textbf{return} $G_{aaa0}$
		\ElsIf{eval $= [1,0]$}
			\State \textbf{return} $L_{0bb_2}$
		\ElsIf{eval $= [1,1]$}
			\State \textbf{return} $L_{a0_3}$		
		\EndIf
	\EndIf
	
~

~

\State \Comment{The quartic $\mathcal{Q}_2$ has a zero root}
\ElsIf{$M=0$ and $L\neq 0$}	

	\State \Comment{$\mathcal{Q}_2$ has only simple roots}	
	\If{$Hyper \neq 0$} 
		\State \textbf{return} $G_{abc0}$
	
	~
	
	\State \Comment{$\mathcal{Q}_2$ has a double zero root and two simple roots}
	\ElsIf{$D_{xy}=0$ and $H^2\neq 4L $}
		\State vectCov $\leftarrow [\mathcal{K}_3,\mathcal{L}]$
		\State eval $\leftarrow$ evaluate(vectCov,$Y$)
		\If{eval $= [0,0]$}
			\State \textbf{return} $G_{ab00}$
		\ElsIf{eval $= [1,0]$}
			\State \textbf{return} $L_{ab0_2}$
		\ElsIf{eval $= [1,1]$}
			\State \textbf{return} $L_{a_2b_2}$	
		\EndIf

	\algstore{myalg2}
\end{algorithmic}
\end{algorithm}

\begin{algorithm}[H]
\begin{algorithmic}
\algrestore{myalg2}
								
	\State \Comment{$\mathcal{Q}_2$ has a double nonzero root and two simple roots}
	\ElsIf{$D_{xy}\neq 0$ and T1$ \neq 0$ and T2$ \neq 0$}
		\State vectCov $\leftarrow [\mathcal{L}]$
		\State eval $\leftarrow$ evaluate(vectCov,$Y$)
		\If{eval $= [0]$}
			\State \textbf{return} $G_{abb0}$
		\ElsIf{eval $= [1]$}
			\State \textbf{return} $L_{a0c_2}$
		\EndIf
			
~
	
	\State \Comment{$\mathcal{Q}_2$ has a triple zero root and a simple root}
	\ElsIf{$D_{xy} = 0$ and $H^2 = 4L$}	
		\State vectCov $\leftarrow [\mathcal{C},\mathcal{D},\mathcal{K}_5,\mathcal{L}]$
		\State eval $\leftarrow$ evaluate(vectCov,$Y$)
		\If{eval $= [0,0,0,0]$}
			\State \textbf{return} $G_{a000}$
		\ElsIf{eval $= [1,0,0,0]$}
			\State \textbf{return} $L_{a00_2}$
		\ElsIf{eval $= [1,1,1,0]$}
			\State \textbf{return} $L_{0b_3}$	
		\ElsIf{eval $= [1,1,0,0]$}
			\State \textbf{return} $L_{a_2a_2}$	
		\ElsIf{eval $= [1,1,1,1]$}
			\State \textbf{return} $L_{a_4}$	
		\EndIf
	
	~
	
	\State \Comment{$\mathcal{Q}_2$ has a triple nonzero root}
	\ElsIf{$D_{xy} \neq 0$ and $I_2 =0$ and $Hyper = 0$}		
		\State vectCov $\leftarrow [\mathcal{D},\mathcal{L}]$
		\State eval $\leftarrow$ evaluate(vectCov,$Y$)
		\If{eval $= [0,0]$}
			\State \textbf{return} $G_{aaa0}$
		\ElsIf{eval $= [1,0]$}
			\State \textbf{return} $L_{0bb_2}$
		\ElsIf{eval $= [1,1]$}
			\State \textbf{return} $L_{a0_3}$		
		\EndIf
	\EndIf			
				
~

~

\State \Comment{The quartic $\mathcal{Q}_3$ has a zero root}
\ElsIf{$N=0$ and $L\neq 0$ and $M \neq 0$}	

	\State \Comment{$\mathcal{Q}_3$ has only simple roots}	
	\If{$Hyper \neq 0$} 
		\State \textbf{return} $G_{abc0}$
	
	~
	
	\State \Comment{$\mathcal{Q}_3$ has a double zero root and two simple roots}
	\ElsIf{$D_{xy}=0$ and $H^2\neq 4M $}
		\State vectCov $\leftarrow [\mathcal{K}_3,\mathcal{L}]$
		\State eval $\leftarrow$ evaluate(vectCov,$Y$)
		\If{eval $= [0,0]$}
			\State \textbf{return} $G_{ab00}$
		\ElsIf{eval $= [1,0]$}
			\State \textbf{return} $L_{ab0_2}$
		\ElsIf{eval $= [1,1]$}
			\State \textbf{return} $L_{a_2b_2}$	
		\EndIf

	 ~
								
	\State \Comment{$\mathcal{Q}_3$ has a double nonzero root and two simple roots}
	\ElsIf{$D_{xy}\neq 0$ and T1$ \neq 0$ and T2$ \neq 0$}
		\State vectCov $\leftarrow [\mathcal{L}]$
		\State eval $\leftarrow$ evaluate(vectCov,$Y$)
		\If{eval $= [0]$}
			\State \textbf{return} $G_{abb0}$
		\ElsIf{eval $= [1]$}
			\State \textbf{return} $L_{a0c_2}$
		\EndIf
			
\algstore{myalg3}
\end{algorithmic}
\end{algorithm}

\begin{algorithm}[H]
\begin{algorithmic}	
\algrestore{myalg3}
 
	\State \Comment{$\mathcal{Q}_3$ has a triple zero root and a simple root}
	\ElsIf{$D_{xy} = 0$ and $H^2 = 4M$}	
		\State vectCov $\leftarrow [\mathcal{C},\mathcal{D},\mathcal{K}_5,\mathcal{L}]$
		\State eval $\leftarrow$ evaluate(vectCov,$Y$)
		\If{eval $= [0,0,0,0]$}
			\State \textbf{return} $G_{a000}$
		\ElsIf{eval $= [1,0,0,0]$}
			\State \textbf{return} $L_{a00_2}$
		\ElsIf{eval $= [1,1,1,0]$}
			\State \textbf{return} $L_{0b_3}$	
		\ElsIf{eval $= [1,1,0,0]$}
			\State \textbf{return} $L_{a_2a_2}$	
		\ElsIf{eval $= [1,1,1,1]$}
			\State \textbf{return} $L_{a_4}$	
		\EndIf
	
~
	
	\State \Comment{$\mathcal{Q}_3$ has a triple nonzero root}
	\ElsIf{$D_{xy} \neq 0$ and $I_2 =0$ and $Hyper = 0$}		
		\State vectCov $\leftarrow [\mathcal{D},\mathcal{L}]$
		\State eval $\leftarrow$ evaluate(vectCov,$Y$)
		\If{eval $= [0,0]$}
			\State \textbf{return} $G_{aaa0}$
		\ElsIf{eval $= [1,0]$}
			\State \textbf{return} $L_{0bb_2}$
		\ElsIf{eval $= [1,1]$}
			\State \textbf{return} $L_{a0_3}$		
		\EndIf
	\EndIf		

~

~

\State \Comment{All the quartics have only nonzero roots}
\Else

	\State \Comment{All the roots are simple}	
	\If{$Hyper \neq 0$} 
		\State \textbf{return} $G_{abcd}$
	
	~
	
	\State \Comment{Each quartic has a double root and two simple roots}
	\ElsIf{T1 $\neq 0$ and T2 $\neq 0$ and T3 $\neq 0$ and $I_2 \neq 0$ and $I_3 \neq 0$}
		\State vectCov $\leftarrow [\mathcal{L}]$
		\State eval $\leftarrow$ evaluate(vectCov,$Y$)
		\If{eval $= [0]$}
			\State \textbf{return} $G_{abcc}$
		\ElsIf{eval $= [1]$}
			\State \textbf{return} $L_{abc_2}$
		\EndIf
		
	~
	
	\State \Comment{Each quartic has a single simple root and a triple root}
	\ElsIf{$I_2 \neq 0$ and $I_3 \neq 0$ and Hess1 $\neq 0$}
		\State vectCov $\leftarrow [\mathcal{K}_5,\mathcal{L}]$
		\State eval $\leftarrow$ evaluate(vectCov,$Y$)
		\If{eval $= [0,0]$}
			\State \textbf{return} $G_{abbb}$
		\ElsIf{eval $= [1,0]$}
			\State \textbf{return} $L_{abb_2}$
		\ElsIf{eval $= [1,1]$}
			\State \textbf{return} $L_{ab_3}$
		\EndIf

	~
		
	\EndIf
	
	~

\EndIf

\end{algorithmic}
\end{algorithm}

\end{document}